\documentclass[journal]{IEEEtran}
\usepackage{amsmath,amssymb,amsfonts}
\usepackage{cite}
\usepackage{graphicx}
\usepackage{bm}
\usepackage{textcomp}
\usepackage{booktabs} 

\newcommand{\dd}{\mathrm{d}}

\newcommand {\Define} {\stackrel {\Delta} {=}  }
\newcommand{\mya}{\mathrel{\overset{\makebox[0pt]{{\tiny(a)}}}{=}}}

\newtheorem{theorem}{Theorem}

\begin{document}

\title{Improving Doppler Resilience of OFDM through Delay-Doppler Sensing}

\author{Danish Nisar, Saif Khan Mohammed, Muhammad Ubadah, Ronny Hadani and Robert Calderbank

\thanks{D. Nisar, S. K. Mohammed, and M. Ubadah are with the Department of Electrical Engineering, Indian Institute of Technology Delhi, India (eez198429@ee.iitd.ac.in, saifkmohammed@gmail.com, eez198356@ee.iitd.ac.in). S. K. Mohammed is currently with Cohere Technologies Inc., CA, USA, on extra-ordinary leave from I.I.T. Delhi.
R. Hadani is with the Department of Mathematics, University of Texas at Austin, USA (hadani@math.utexas.edu) and also with Cohere Technologies Inc., CA, USA. R. Calderbank is with the Department of Electrical and Computer Engineering, Duke University, USA (robert.calderbank@duke.edu).}}

\maketitle

\begin{abstract}
The performance of traditional CP-OFDM degrades severely in doubly-spread wireless channels due to inter-carrier interference (ICI). In this paper, we propose DD domain sensing based CP-OFDM where we transmit a Zadoff-Chu (ZC) pilot signal overlaid on CP-OFDM data carriers. At the receiver, DD domain signal processing is used to acquire the effective DD domain channel filter
which is stationary in the DD domain. From this DD domain estimate, we derive the complete frequency domain (FD) input-output (I/O) relation between CP-OFDM carriers, acquiring which is otherwise difficult with traditional time-frequency signal processing. Using this FD I/O relation, we estimate the received FD pilot signal which is then canceled from the received FD signal, resulting in a data-only signal. Joint detection of all CP-OFDM data carriers from this data-only signal equalizes the effect of ICI. Numerical simulations of the standardized 3GPP TDL-C channel shows that in high mobility scenarios, the proposed DD domain sensing based CP-OFDM achieves significantly better spectral efficiency when compared to that achieved by traditional CP-OFDM.
\end{abstract}

\begin{IEEEkeywords}
Zak-OTFS, CP-OFDM, Channel Estimation, High Mobility, Doppler Spread, Predictability, Zadoff-Chu.
\end{IEEEkeywords}

\section{Introduction}
Next generation wireless communication systems are expected to achieve reliable high-throughput communication even in doubly-spread channels \cite{IMT2030, OnTheRoadTo6G}. The performance of existing 4G/5G systems based on CP-OFDM (Cyclic Prefix - Orthogonal Frequency Division Multiplexing) degrades severely in high mobility scenarios, due to inter-carrier interference (ICI) resulting from channel Doppler spread \cite{Nee2000, Wang2006}. Recently Zak-OTFS (Zak - Orthogonal Time Frequency Space) modulation was proposed, whose performance does not degrade even in high mobility scenarios \cite{zakotfs1, zakotfs2, otfsbook}. In Zak-OTFS, information is carried by waveforms which are narrow quasi-periodic pulses in the delay-Doppler (DD) domain. Zak-OTFS can also be implemented over CP-OFDM \cite{Zak_OTFS_over_CP_OFDM}, as a precoder at the transmitter which converts the DD domain information into frequency domain (FD) symbols which are input to a CP-OFDM modulator. Similarly, at the receiver, the FD symbols at the output of the CP-OFDM demodulator are converted to DD domain symbols.
This Zak-OTFS over CP-OFDM architecture allows for the benefits of Zak-OTFS modulation to be achieved in existing 4G/5G modems as the precoder at the transmitter and the post-processor at the receiver can be implemented in software.

In CP-OFDM, we traditionally estimate the channel response on pilot sub-carriers and interpolate them to obtain the channel response on the data sub-carriers. In CP-OFDM it is challenging to acquire the interference coefficient
between sub-carriers. That is, it is difficult to estimate the complete FD input-output (I/O) relation accurately as it changes with time due to the doubly-selective nature of wireless channels. In the absence of the complete I/O relation, even joint equalization of all carriers cannot equalize the effect of ICI.

On the other hand, the Zak-OTFS I/O relation in the DD domain is stationary, i.e., the DD domain output is given by twisted convolution between the effective DD domain channel filter and the DD domain input signal. Due to this stationarity, the interaction between a doubly-spread channel and the Zak-OTFS carrier waveforms is predictable, i.e., the channel response to any Zak-OTFS carrier waveform can be accurately predicted from the response to a particular Zak-OTFS pilot carrier waveform \cite{zakotfs2, otfsbook}.
Further, when the channel delay/Doppler spread is less than the delay/Doppler period of the quasi-periodic Zak-OTFS carrier pulse, the DD domain I/O relation can be accurately and completely acquired using a single Zak-OTFS pilot carrier waveform \cite{zakotfs1, zakotfs2}.

In this paper, we propose DD domain sensing based CP-OFDM, where we leverage the predictability of the Zak-OTFS I/O relation to acquire the effective DD domain channel filter which then gives us the complete FD I/O relation. We then jointly equalize all CP-OFDM sub-carriers which equalizes the effect of ICI, resulting in performance which is robust to channel Doppler spread. To be precise we consider the Zadoff-Chu (ZC) pilot signal which is used in 3GPP 5G NR, and which has good auto-ambiguity function required for accurate channel estimation and has low peak-to-average-power ratio (PAPR) \cite{Mattu_ZC_DD}.
This pilot signal is overlaid on top of the information/data symbols carried by each sub-carrier and therefore we do not need to transmit traditional CP-OFDM pilot carriers. This reduces the pilot overhead compared to traditional CP-OFDM.

At the receiver, we receive a superposition of the pilot and data signal. Leveraging the Zak-OTFS over CP-OFDM architecture, we convert the received FD symbols at the output of the CP-OFDM demodulator to DD domain symbols. Cross-ambiguity between the received DD domain symbols and the ZC pilot signal gives an estimate of the effective DD domain channel filter. Although the received DD domain symbols also contain contribution from the data signal, the cross-ambiguity/de-spreading gain of the ZC pilot signal reduces the contribution of the data signal to the estimated effective DD domain channel filter. This channel filter is then transformed to give us the complete FD I/O relation. Using this estimated I/O relation we compute an estimate of the received FD pilot signal which is then canceled from the received FD symbols resulting in a ``data-only" received signal. We then jointly equalize all sub-carriers in this received FD data-only signal which gives us an estimate of the transmitted FD information symbols. The estimated information symbols and the estimated FD I/O relation are then used to derive an estimate of the received data signal which is then canceled from the received signal resulting in a ``pilot-only" signal.
Cross-ambiguity between this pilot-only signal and the transmitted ZC pilot signal gives a better channel estimate. In this manner, we iterate between channel estimation and data detection resulting in improvement in the channel estimation accuracy and the spectral efficiency with increasing number of iterations.

The proposed DD domain sensing based CP-OFDM is compatible with existing 4G/5G modems as in the transmitter we only need to add a ZC pilot signal on each sub-carrier, and all receiver processing can be performed in software on the received FD symbols at the output of the CP-OFDM demodulator. With $K$ sub-carriers, the proposed DD domain sensing based CP-OFDM receiver has a complexity of $O(K^2)$ compared to the $O(K \, \log(K))$ complexity of traditional CP-OFDM receiver. The higher complexity is due to the joint equalization of all sub-carriers which is the price for achieving robustness towards channel Doppler spread.

Through simulations, we compare the
effective spectral efficiency (SE) achieved by the proposed DD domain sensing based CP-OFDM and traditional CP-OFDM for the 3GPP TDL-C channel model \cite{3gpptr38901}. For both systems, we optimize the SE w.r.t. the allocation of power between pilot and data, and the highest possible modulation and coding scheme (MCS) for which an average low-density parity check code (LDPC) block error rate (BLER) of $0.1$ or lower is achievable. For a given total transmit power to noise ratio (TPNR), for low mobility scenarios (low channel Doppler spread) CP-OFDM achieves better SE than the proposed DD domain sensing based CP-OFDM. However, with increasing channel Doppler spread, the SE achieved by traditional CP-OFDM degrades severely in comparison to that achieved by the proposed DD domain sensing based CP-OFDM. We also study the impact of the pilot to data ratio on the achieved SE. The main contributions of our work in this paper are as follows.

\begin{itemize}
    \item We propose DD domain sensing based CP-OFDM which is compatible with existing 4G/5G modems.
    \item At the transmitter, a ZC pilot signal is overlaid on the CP-OFDM data carriers. This saves overhead of CP-OFDM pilot sub-carriers.
    \item At the receiver, we acquire the effective DD domain channel filter through DD domain signal processing, followed by conversion of the acquired estimate to an estimate of the complete FD I/O relation. This is then used to cancel the pilot signal resulting in a data-only signal which is used to jointly equalize all CP-OFDM data sub-carriers. We iterate between channel estimation and data detection to get performance improvement.
    \item We provide exhaustive numerical study on the variation in the channel estimation accuracy and the SE performance with varying channel Doppler spread, and TPNR. 
\end{itemize}
The paper is organized as follows. In Section \ref{seczakotfs} we review Zak-OTFS when implemented over CP-OFDM. In Section \ref{embda} we present the proposed DD domain sensing based CP-OFDM, followed by numerical simulations in Section \ref{simsec}.

\section{Zak-OTFS modulation}
\label{seczakotfs}
\begin{figure*}[h]
\vspace{-1mm}
\centering
\includegraphics[width=16.8cm, height=6.9cm]{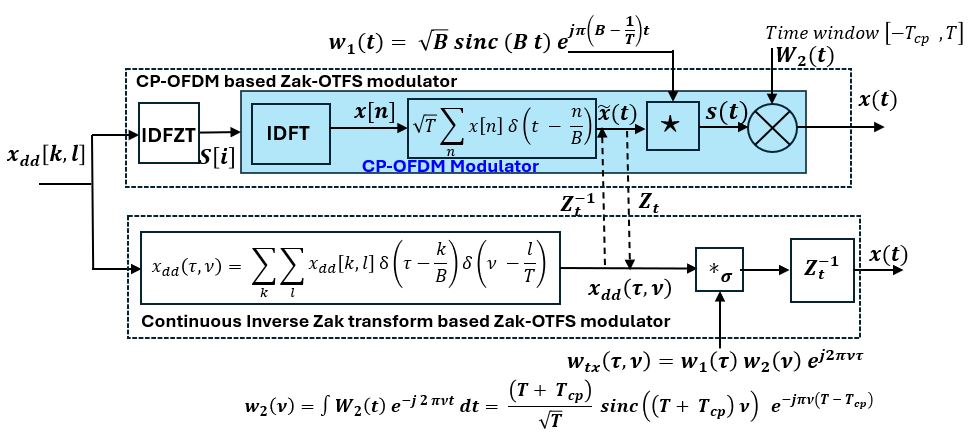}
\vspace{-2mm}
\caption{
Zak-OTFS modulation implemented as IDFZT followed by CP-OFDM modulator (see top chain).}
\label{fig1}
\vspace{-5mm}
\end{figure*}
In Fig.~\ref{fig1}, the top signal processing chain shows the architecture of a Zak-OTFS over OFDM transmitter,
which implements Zak-OTFS modulation as a precoder
over CP-OFDM \cite{Zak_OTFS_over_CP_OFDM}. This enables implementation of Zak-OTFS modulation in existing 4G/5G modems.

Let $B$ be the bandwidth of the communication signal and
$T=1/\Delta f$, where $\Delta f$ is the sub-carrier spacing of the underlying CP-OFDM modulator. Let $\tau_p$ and $\nu_p = 1/\tau_p$ denote the delay and Doppler period of Zak-OTFS modulation. Let $M \Define B \tau_p$, $N \Define T \nu_p$.
Let $x[k,l]$, $k=0,1,\cdots, M-1$, $l=0,1,\cdots, N-1$
denote the $MN = BT$ information symbols. These are firstly
embedded onto the discrete DD domain signal $x_{dd}[k,l]$ given by
\begin{eqnarray}
    x_{dd}[k,l] & = & x[k \, \text{mod} \, M,l \, \text{mod} \, N] \, e^{j 2 \pi \lfloor \frac{k}{M} \rfloor \frac{l}{N}},
\end{eqnarray}$k,l \in {\mathbb Z}$. Note that $x_{dd}[k,l]$ is quasi-periodic with periods
$M$ and $N$ respectively along the discrete delay and Doppler axes, i.e. for all $k,l,n,m \in {\mathbb Z}$
\begin{eqnarray}
\label{qpeqn234}
    x_{dd}[k + nM, l + mN] & = & e^{j 2 \pi n \frac{l}{N}} \, x_{dd}[k,l].
\end{eqnarray}As shown in Fig.~\ref{fig1}, the DD domain signal is converted to the discrete frequency-domain (FD) signal $S[i], i \in {\mathbb Z}$ using the Inverse Discrete Frequency Zak Transform (IDFZT)
    \begin{eqnarray}
\label{sieqn12}
    S[i] & = & \text{IDFZT}(x_{dd}[k,l]) \nonumber \\
    & = & \frac{1}{\sqrt{M}} \sum\limits_{k=0}^{M-1} x_{dd}[k,i] \, e^{-j 2 \pi \frac{i k}{MN}}, \,\,i \in {\mathbb Z}.
\end{eqnarray}Note that $S[i], i \in {\mathbb Z}$ is periodic with period $MN$ because $x_{dd}[k,l]$ is exactly periodic along the discrete Doppler axis with period $M$ (see (\ref{qpeqn234})), i.e., $x_{dd}[k, i + mN] = x_{dd}[k,i]$ and therefore $x_{dd}[k, i + MN] = x_{dd}[k,i]$.
\begin{eqnarray}
    S[i+ MN] & = & S[i],
\end{eqnarray}for all $i \in {\mathbb Z}$. $S[i]$ is then input to a CP-OFDM modulator shown in Fig.~\ref{fig1} whose output is the transmit time-domain signal $x(t)$ which is related to $S[i]$ through the following equations.
$S[i]$ is converted to the discrete-time signal through inverse Discrete Fourier Transform (IDFT)
\begin{eqnarray}
    x[n] & = & \frac{1}{\sqrt{MN}} \sum\limits_{i=0}^{MN-1} S[i] \, e^{j 2 \pi \frac{i n}{MN}},
\end{eqnarray}$n \in {\mathbb Z}$. Note that $x[n]$ is periodic with period $MN$, $x[n + MN] = x[n]$.
Finally
\begin{eqnarray}
    x(t) & = & W_2(t) \, \left[ w_1(t) \star \Tilde{x}(t)\right], \nonumber \\
    \Tilde{x}(t) & \Define & \sqrt{T} \sum\limits_{n \in {\mathbb Z}} x[n] \, \delta\left( t - \frac{n}{B}\right).
\end{eqnarray}

In \cite{Zak_OTFS_over_CP_OFDM}, it has been shown that $x(t)$
is a Zak-OTFS modulated signal with $x_{dd}[k,l]$ as the information signal and $w_{tx}(\tau, \nu) = w_1(\tau) w_2(\nu) \, e^{j 2 \pi \nu \tau}$ as the pulse shaping filter, $w_2(\nu) = \int W_2(t) \, e^{-j 2 \pi \nu t} \, dt$, i.e.
\begin{eqnarray}
    x(t) & = & {\mathcal Z}_t^{-1}\left( w_{tx}(\tau, \nu) *_{\sigma} x_{dd}(\tau, \nu) \right), \nonumber \\
    x_{dd}(\tau, \nu) & \Define & \sum\limits_{k,l \in {\mathbb Z}} x_{dd}[k,l] \, \delta(\tau - k/B) \, \delta(\nu - l/T), \nonumber \\
\end{eqnarray}where ${\mathcal Z}_t^{-1}$ denotes the inverse Zak transform which gives the time-domain representation of any DD domain signal and $*_{\sigma}$
denotes the twisted convolution operation in the DD domain
\begin{eqnarray}
    x_{dd}^{w_{tx}}(\tau, \nu) & \Define & w_{tx}(\tau, \nu) *_{\sigma} x_{dd}(\tau, \nu), \nonumber \\
    &  & \hspace{-27mm} = \iint w_{tx}(\tau', \nu') \, x_{dd}(\tau - \tau', \nu - \nu') \, e^{j 2 \pi \nu' (\tau - \tau'}) \, d\tau' \, d\nu'.
\end{eqnarray}The inverse Zak transform of the DD domain signal $x_{dd}^{w_{tx}}(\tau, \nu)$ is given by
\begin{eqnarray}
\label{eqnxtzak}
    x(t) & = & {\mathcal Z}_t^{-1}\left(  x_{dd}^{w_{tx}}(\tau, \nu) \right) \nonumber \\
    & = & \sqrt{\tau_p} \int\limits_{0}^{\nu_p} x_{dd}^{w_{tx}}(t, \nu) \, d\nu.
\end{eqnarray}Since the CP-OFDM signal also has a cyclic prefix of duration $T_{cp}$ and assuming that $x(t)$ is band-limited to $\left[-\frac{1}{2T} \,,\, B  -\frac{1}{2T} \right]$, we specifically choose the pulse shaping filters to be
\begin{eqnarray}
    w_1(\tau) & = & \sqrt{B} \, \text{sinc}(B \tau) \, e^{j \pi \left( B - \frac{1}{T}\right) t}, \nonumber \\
    w_2(\nu) & = & \frac{(T + T_{cp})}{\sqrt{T}} \, \text{sinc}((T + T_{cp}) \nu) \, e^{-j \pi \nu (T - T_{cp})},
\end{eqnarray}which corresponds to rectangular time-windowing, i.e.
\begin{eqnarray}
\label{eqnW2t}
    W_2(t) & \Define  \begin{cases}
       \frac{1}{\sqrt{T}} &, -T_{cp} \leq t < T, \\
       0  &, \mbox{\small{otherwise}} \\
    \end{cases}.
\end{eqnarray}
\begin{figure*}[h]
\vspace{-2mm}
\centering
\includegraphics[width=16.8cm, height=6.9cm]{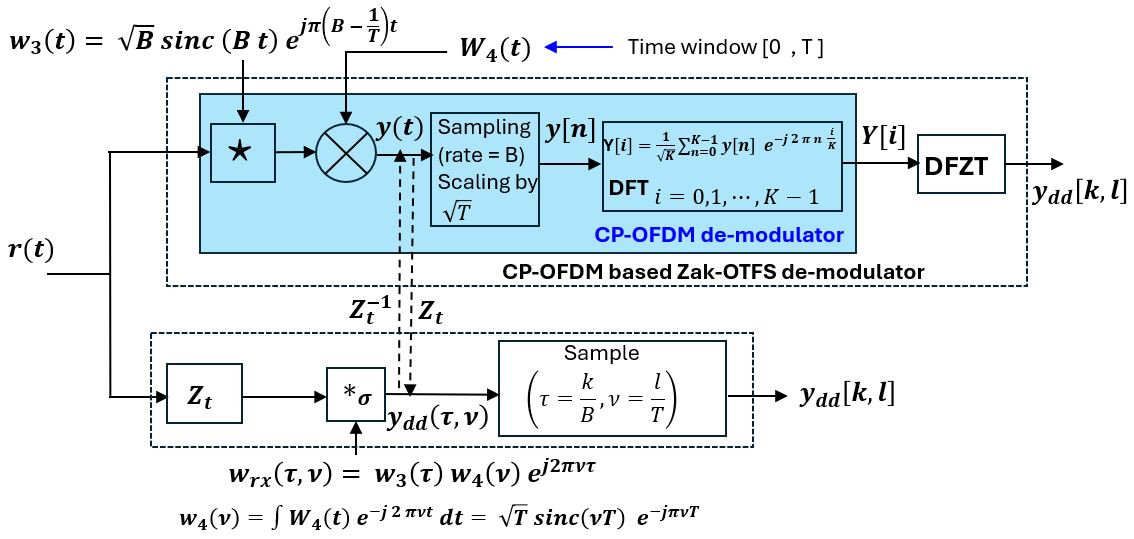}
\vspace{-2mm}
\caption{
Zak-OTFS de-modulation implemented as CP-OFDM de-modulator followed by DFZT (see top chain).}
\label{fig2}
\vspace{-1mm}
\end{figure*}
The transmitted signal $x(t)$ propagates through a linear time-varying (LTV) multipath channel, characterized by its DD spreading function $h_{\text{phy}}(\tau, \nu)$ given by
\begin{equation}
    h_{\text{phy}}(\tau, \nu) = \sum_{i=0}^{P-1} h_i \delta(\tau - \tau_i) \delta(\nu - \nu_i),
\end{equation}
where $h_i$, $\tau_i$, and $\nu_i$ represent the complex gain, delay, and Doppler shift of the $i$-th propagation path, respectively. The received time-domain signal is given by \cite{Bello, Hlawatsch2011}
\begin{equation}
\label{rteqn}
    r(t) = \iint h_{\text{phy}}(\tau,\nu) x(t-\tau) e^{j2\pi\nu(t-\tau)} \dd\tau \dd\nu + n(t),
\end{equation}
where $n(t)$ is additive white Gaussian noise (AWGN).
Let $\nu_{max}$ and $\tau_{max}$ denote the maximum possible path Doppler and Delay shift respectively.

The Zak-OTFS demodulator (receiver) can also be implemented as a post-processing of the output of the CP-OFDM demodulator as shown in Fig.~\ref{fig2} \cite{Zak_OTFS_over_CP_OFDM}.
This enables implementation of the Zak-OTFS de-modulator in existing 4G/5G modems. The output of the CP-OFDM demodulator is the signal $Y[i], i \in {\mathbb Z}$ received on the frequency-domain subcarriers ($Y[i]$ is received on the $i$-th subcarrier). The post-processor is the Discrete Frequency Zak Transform 
(DFZT) which takes as input $Y[i], i \in {\mathbb Z}$ and gives its discrete DD domain representation
    \begin{eqnarray}
\label{seqn12988}
    y_{dd}[k,l] & = & \text{DFZT}(Y[i]) \nonumber \\
    & = & \frac{1}{\sqrt{M}} \sum\limits_{p=0}^{M-1} Y[l + pN] \, e^{j 2 \pi (l + pN) \frac{k}{MN}},
\end{eqnarray}$k,l \in {\mathbb Z}$.
In the CP-OFDM demodulator, $r(t)$ is firstly time and bandwidth limited by filtering with a sinc filter $w_3(t)$ followed by a time-window $W_4(t)$ which limits the received time-domain signal to the time-interval $[0 \,,\, T]$ as in a CP-OFDM de-modulator we drop the cyclic prefix.
\begin{eqnarray}
    y(t) & = & W_4(t) \, \left[ w_3(t) \, \star \, r(t)\right], \nonumber \\
    w_3(t) & = & \sqrt{B} \, \text{sinc}(B \tau) \, e^{j \pi \left( B - \frac{1}{T}\right) t}, \nonumber \\
     W_4(t) & \Define  & \begin{cases}
       \frac{1}{\sqrt{T}} &, 0 \leq t < T, \\
       0  &, \mbox{\small{otherwise}} \\
    \end{cases}.
\end{eqnarray}The time and bandwidth limited received signal is then sampled resulting in the discrete-time
received signal
\begin{eqnarray}
    y[n] & = & \sqrt{T} y\left( t = \frac{n}{B}\right) .
\end{eqnarray}The frequency domain symbols received on the subcarriers is then given by the Discrete Fourier Transform (DFT) of $y[n]$, i.e.
\begin{eqnarray}
Y[i] = \frac{1}{\sqrt{MN}} \sum\limits_{n=0}^{MN-1} y[n] \, e^{-j 2 \pi n \frac{i}{MN}},
\end{eqnarray}$i \in {\mathbb Z}$.

In \cite{Zak_OTFS_over_CP_OFDM} it has been shown that $y_{dd}[k,l]$ is indeed the output of a Zak-OTFS demodulator, i.e., it is the match-filtered DD domain representation of the received signal sampled in the DD domain
\begin{eqnarray}
    y_{dd}[k,l] & = & y_{dd}\left( \tau = \frac{k}{B} \,,\, \nu = \frac{l}{T} \right), \nonumber \\
    y_{dd}(\tau, \nu) & = & w_{rx}(\tau, \nu) *_{\sigma} r_{dd}(\tau, \nu), \nonumber \\
    w_{rx}(\tau, \nu) & = & w_3(\tau) w_4(\nu) \, e^{j 2 \pi \nu \tau}, \nonumber \\
    w_4(\nu) & \hspace{-3mm} = & \hspace{-3mm} \int W_4(t) \, e^{-j 2 \pi \nu t} \, dt \, = \, \sqrt{T} \text{sinc}(\nu T) \, e^{-j \pi \nu T}. \nonumber \\
\end{eqnarray}Here $r_{dd}(\tau, \nu)$ is the DD domain representation of the received time-domain signal $r(t)$ and is given by its Zak transform \cite{Zak67,Janssen88,otfsbook}
\begin{eqnarray}
    r_{dd}(\tau, \nu) &  = & {\mathcal Z}_t\left( r(t) \right), \nonumber \\
    & = & \sqrt{\tau_p} \sum\limits_{n \in {\mathbb Z}} r(\tau + n \tau_p) \, e^{-j 2 \pi n \nu \tau_p}.
\end{eqnarray}
\begin{figure*}[h]
\vspace{-1mm}
\centering
\includegraphics[width=17.8cm, height=4.9cm]{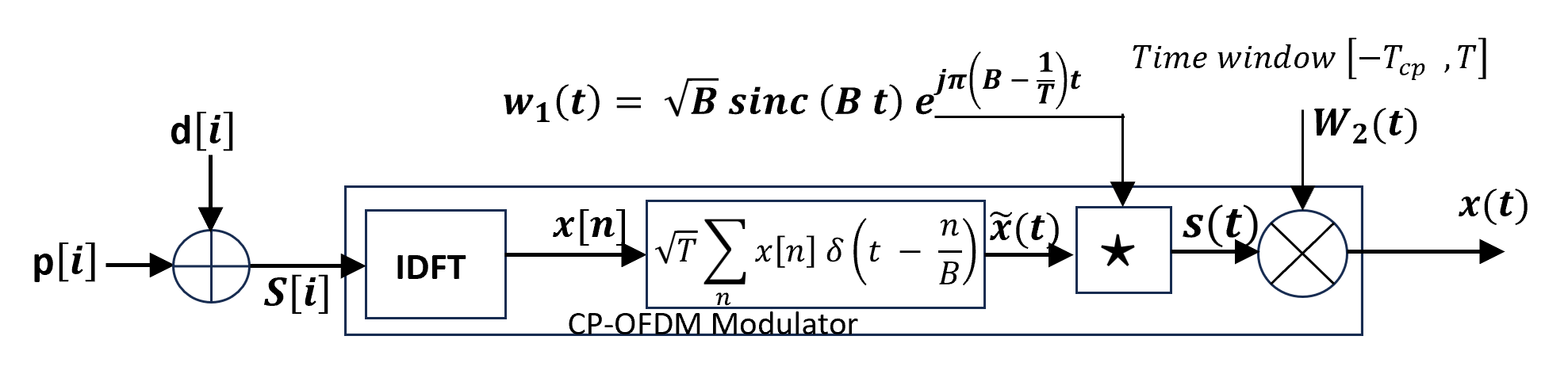}
\vspace{-2mm}
\caption{Block diagram of the ZC Spread Pilot CP-OFDM Transmitter. The DD domain ZC sequence is transformed via the IDFZT and superimposed on standard FD data symbols prior to CP-OFDM modulation.}
\label{fig:transmitter}
\vspace{-1mm}
\end{figure*}
The Zak-OTFS DD domain I/O relation between the input $x_{dd}[k,l]$ and the output $y_{dd}[k,l]$ is given by \cite{zakotfs2,otfsbook}
\begin{eqnarray}
\label{eqniorel1}
    y_{dd}[k,l] & = & h_{dd}[k,l] \, *_{\sigma} \, x_{dd}[k,l] \, + \, n_{dd}[k,l],
\end{eqnarray}where $n_{dd}[k,l]$ is the Zak transform of AWGN $n(t)$ match-filtered with $w_{rx}(\tau, \nu)$ followed by DD domain sampling, $*_{\sigma}$ denotes the discrete DD domain twisted convolution operation given by
\begin{eqnarray}
\label{eqniorel2}
    h_{dd}[k,l] \, *_{\sigma} \, x_{dd}[k,l] &  & \nonumber \\
    & & \hspace{-29mm} = \sum\limits_{k',l' \in {\mathbb Z}} h_{dd}[k',l'] \, x_{dd}[k - k', l - l'] \, e^{j 2 \pi l' \frac{(k - k')}{MN}}.
\end{eqnarray}Also, $h_{dd}[k,l]$ is the discrete DD domain effective channel filter given by
\begin{eqnarray}
\label{eqniorel3}
    h_{dd}[k,l] & \Define & h\left( \tau = \frac{k}{B} \,,\, \nu = \frac{l}{T}\right), \nonumber \\
    h_{dd}(\tau, \nu) & \Define & w_{rx}(\tau, \nu) \, *_{\sigma} \, h_{\text{phy}}(\tau, \nu) \, *_{\sigma} \, w_{tx}(\tau, \nu).
\end{eqnarray}
From this DD domain I/O relation between $y_{dd}[k,l]$ and $x_{dd}[k,l]$, the FD domain I/O relation between $S[i]$ and $Y[i]$ is given by the following theorem.

\begin{theorem}
    \label{thm1}
  The FD I/O relation between the input FD signal $S[i]$ and the output FD signal $Y[i]$ is given by
  \begin{eqnarray}
  \label{eqn2323}
    Y[i] & = &  \sum\limits_{l=0}^{MN-1}  S[i - l] \, H_f[i, l] \, + \, N[i],
  \end{eqnarray}where
  \begin{eqnarray}
  \label{eqn2864793}
  N[i] & \Define & \text{IDFZT}(n_{dd}[k,l]), \nonumber \\
  & = & \frac{1}{\sqrt{M}} \sum\limits_{k =0}^{MN-1} n_{dd}[k,i] \, e^{-j 2\pi \frac{i k}{MN}}, \nonumber \\
      H_f[i,l] & \Define & \sum\limits_{k=0}^{MN-1} h[k,l] \, e^{-j 2 \pi \frac{i k}{MN}}, \nonumber \\
      h[k,l] & \Define & \sum\limits_{n,m \in {\mathbb Z}} h[k + nMN, l + mMN].
  \end{eqnarray}
  Further the FD domain channel coefficients $H_f[i,l]$ are periodic in both $i$ and $l$ with period $MN$, i.e., for any $n,m \in {\mathbb Z}$
  \begin{eqnarray}
      H_f[i+nMN, l + mMN] & = H_f[i,l].
  \end{eqnarray}
\end{theorem}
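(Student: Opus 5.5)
We start from the DD domain I/O relation (\ref{eqniorel1})--(\ref{eqniorel2}) and apply the IDFZT in the form (\ref{sieqn12}). The key observation is that the DFZT and IDFZT are inverse to each other: the output $Y[i]$ of the CP-OFDM demodulator and $y_{dd}[k,l]$ are related by (\ref{seqn12988}), and inverting it gives $Y[i]=\frac{1}{\sqrt{M}}\sum_{k=0}^{M-1} y_{dd}[k,i]\,e^{-j2\pi \frac{ik}{MN}}$. This is exactly the IDFZT of $y_{dd}$. We will check this inversion first, by substituting (\ref{seqn12988}) and using the orthogonality $\sum_{k=0}^{M-1} e^{j2\pi pk/M}=M\delta[p \bmod M]$. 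By linearity the noise term then becomes $N[i]=\text{IDFZT}(n_{dd}[k,l])$. We will also note that the $\frac{1}{\sqrt M}\sum_{k=0}^{MN-1}$ written in (\ref{eqn2864793}) should be reconciled with the $M$-term sum; quasi-periodicity makes the summand $n_{dd}[k,i]e^{-j2\pi ik/(MN)}$ periodic in $k$ with period $M$, so the two forms differ only by normalization.

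Next we substitute the twisted convolution into the IDFZT:
\[
Y[i]-N[i]=\frac{1}{\sqrt M}\sum_{k=0}^{M-1}\sum_{k',l'} h_{dd}[k',l']\,x_{dd}[k-k',i-l']\,e^{j2\pi \frac{l'(k-k')}{MN}}e^{-j2\pi\frac{ik}{MN}} .
\]
We then change variables to $k''=k-k'$ and rewrite the phase as $e^{-j2\pi \frac{(i-l')k''}{MN}}e^{-j2\pi\frac{ik'}{MN}}$.

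The inner sum over $k''$ ranges over the shifted window $\{-k',\dots,M-1-k'\}$. By quasi-periodicity (\ref{qpeqn234}), $x_{dd}[k''+nM,\,i-l']\,e^{-j2\pi (i-l')(k''+nM)/(MN)}$ is invariant in $n$, because the factor $e^{j2\pi n(i-l')/N}$ cancels against $e^{-j2\pi n(i-l')/N}$. Any window of length $M$ therefore gives the same sum, which is $\sqrt{M}\,S[i-l']$. This step, together with the inversion check, is where we expect the bookkeeping to be most delicate.

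It follows that $Y[i]-N[i]=\sum_{k',l'\in\mathbb Z} h_{dd}[k',l']\,e^{-j2\pi ik'/(MN)}\,S[i-l']$. Since $S$ is $MN$-periodic and $e^{-j2\pi ik'/(MN)}$ is $MN$-periodic in $k'$, we write $k'=k+nMN$ and $l'=l+mMN$ with $k,l\in\{0,\dots,MN-1\}$. Collecting terms gives the aliased filter $h[k,l]$, interpreting the $h$ on the right-hand side of its definition as $h_{dd}$. The result is (\ref{eqn2323}) with $H_f[i,l]=\sum_{k=0}^{MN-1}h[k,l]e^{-j2\pi ik/(MN)}$. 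We need absolute summability of $h_{dd}$, which holds because of its sinc decay, to justify the rearrangement.

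Finally, periodicity of $H_f$ is immediate. In $i$, the period comes from $e^{-j2\pi ik/(MN)}$ being $MN$-periodic in $i$. In $l$, it comes from $h[k,l+mMN]=h[k,l]$, which holds because the sum defining $h$ runs over all $m\in\mathbb Z$.
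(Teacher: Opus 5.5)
Your proposal is correct and follows the paper's proof step for step. You apply the IDFZT to the DD I/O relation, and your explicit DFZT/IDFZT inversion check verifies what the paper simply asserts. You then shift the length-$M$ delay window via quasi-periodicity so that each twisted-convolution term becomes $e^{-j2\pi ik'/(MN)}S[i-l']$, and reduce $(k',l')$ modulo $MN$ to obtain the aliased filter and $H_f$, whose periodicity is then immediate. Your reading of the two typos in the statement is also right: the $\sum_{k=0}^{MN-1}$ in $N[i]$ is off by a factor $N$, and $h$ should be $h_{dd}$ inside the aliasing sum.

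The only inaccuracy is your side remark that sinc decay gives absolute summability of $h_{dd}$. Sinc samples at fractional delay or Doppler offsets decay only like $1/|k|$ and $1/|l|$, which is square-summable but not absolutely summable. The regrouping into residue classes mod $MN$ is therefore formal here, exactly as it is in the paper's own derivation.
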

\begin{IEEEproof}
See Appendix \ref{apa}.
\end{IEEEproof}

\section{Proposed DD domain sensing for CP-OFDM}
\label{embda}
\subsection{Transmitter signal processing}
The transmitter architecture is based on Zak-OTFS over CP-OFDM discussed in the previous section and is as shown in Fig.~\ref{fig:transmitter}.
Let $d[i], i=0,1,2,\cdots, K-1$, $K \Define MN$ denote the QAM information symbol transmitted on the $i$-th sub-carrier (each having unit average energy). These symbols could be the output of a forward error correction (FEC) encoder (e.g., Low Density Parity Check Code (LDPC)).
The overlaid sensing/pilot signal transmitted on the $i$-th sub-carrier is denoted by $p[i], i=0,1,2,\cdots, K-1$.
The symbol transmitted on the $i$-th subcarrier is then given by
\begin{eqnarray}
\label{eqn2874500}
    S[i] & = & \sqrt{E_d} \, d[i] \, + \, \sqrt{E_p} \, p[i],
\end{eqnarray}$i=0,1,\cdots, K-1$. Here $E_d$ is the average energy of the information symbol transmitted on each OFDM sub-carrier and $E_p$ is the pilot energy on each sub-carrier. Therefore, $E_p/E_d$ is the pilot to data power ratio (PDR). These $K$ symbols are then input to the CP-OFDM modulator whose output is the transmit signal $x(t)$ (see Fig.~\ref{fig:transmitter}).

The FD domain pilot signal $p[i]$ is a Zadoff-Chu (ZC) given by the Inverse Discrete Fourier Transform (IDFT) of its discrete-time representation \cite{}
\begin{eqnarray}
\label{xuneqn}
    X_u[n] & = & e^{- j 2 \pi u \frac{n (n+1)}{2 K}},
\end{eqnarray}where $u \in {\mathbb Z}$ is the root index, and is relatively prime to $K$. Hence
\begin{eqnarray}
    p[i] & = & \frac{1}{\sqrt{K}} \sum\limits_{n=0}^{K-1} X_u[n] \, e^{-j 2 \pi \frac{i n}{K}},
\end{eqnarray}$i=0,1,\cdots, K-1$.
ZC sequences are already being used as reference signals in existing 4G/5G modems.
The discrete DD domain representation of the ZC signal is given by the Discrete Frequency Zak Transform (DFZT) of $p[i]$, i.e. \cite{Mattu_ZC_DD}
\begin{eqnarray}
    X_{dd}^u[k,l] & = & \text{DFZT}(p[i]), \nonumber \\
    & = &   \frac{1}{\sqrt{M}} \sum\limits_{i=0}^{M-1} p[l + iN] \, e^{j 2 \pi (l + iN) \frac{k}{MN}}.
\end{eqnarray}$k,l \in {\mathbb Z}$.

\begin{figure*}[h]
\centering
\includegraphics[width=16.8cm, height=10.9cm]{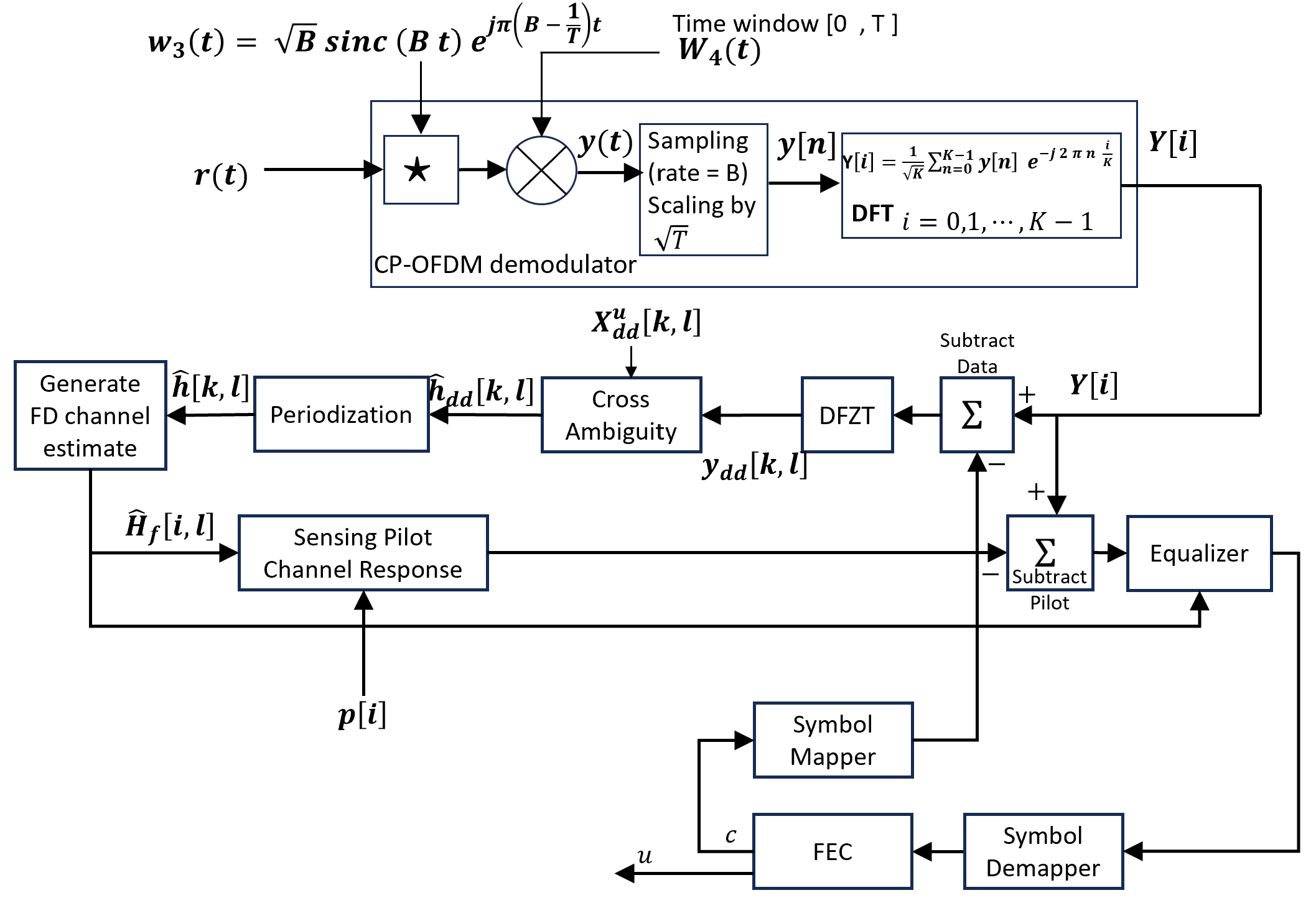}
\vspace{-2mm}
\caption{Block diagram of the DD domain sensing based CP-OFDM Receiver. }
\label{fig:receiver}
\vspace{-1mm}
\end{figure*}
\subsection{Receiver signal processing}
The receiver is depicted in Fig. \ref{fig:receiver}.
Let $Y[i], i=0,1,\cdots, K-1$ denote the FD symbols received on the $K$ CP-OFDM subcarriers at the output of the CP-OFDM demodulator (based on the Zak-OTFS over CP-OFDM demodulator discussed in the previous section).

\subsubsection{Receiver Step - $1$: Conversion to DD domain}
FD signal $Y[i]$ is firstly converted to its discrete DD domain representation through the Discrete Frequency Zak Transform (DFZT) given by
\begin{eqnarray}
       y_{dd}[k,l] & = & \text{DFZT}(Y[i]) \nonumber \\
    & = & \frac{1}{\sqrt{M}} \sum\limits_{p=0}^{M-1} Y[l + pN] \, e^{j 2 \pi (l + pN) \frac{k}{MN}}.
\end{eqnarray}Since $y_{dd}[k,l]$ is quasi-periodic, it suffices to compute it only for $k=0,1,\cdots,M-1$, $l=0,1,\cdots, N-1$ which has complexity $O(MN \log(MN))$. From (\ref{eqniorel1}) and (\ref{eqn2874500}), the DD domain I/O relation is given by

{\vspace{-4mm}
\small
\begin{eqnarray}
\label{eqn265423}
    y_{dd}[k,l] & \hspace{-3mm} = & \hspace{-3mm} h_{dd}[k,l] \, *_{\sigma} \, \left(  \sqrt{E_p} X_{dd}^u[k,l] \, + \, \sqrt{E_d} d_{dd}[k,l] \right) \nonumber \\
    & & \hspace{20mm} \, + \, n_{dd}[k,l], \nonumber \\
\end{eqnarray}\normalsize}where $d_{dd}[k,l]$ is the DD domain representation of the FD information symbols $d[i], i=0,1,\cdots, K-1$ and is given by their DFZT, i.e.
\begin{eqnarray}
    d_{dd}[k,l] & = & \frac{1}{\sqrt{M}} \sum\limits_{p=0}^{M-1} d[l + pN] \, e^{j 2 \pi (l + pN) \frac{k}{MN}}.
\end{eqnarray}

\subsubsection{Receiver Step - $2$: Channel acquisition in DD domain}
The receiver computes the cross ambiguity $A_{y,X^u}[k,l]$ between the received DD signal $y_{\text{dd}}[k,l]$ and the transmitted ZC pilot signal $X^u_{\text{dd}}[k,l]$
given by \cite{Interleaved_Pilots, Zak_OTFS_Identification}
\begin{eqnarray}
\label{eqnchest}
    A_{y,X^u}[k,l] \nonumber \\
    &  & \hspace{-26mm} = \hspace{-1.5mm} \sum_{k'=0}^{M-1} \sum_{l'=0}^{N-1} y_{\text{dd}}[k',l'] \left( X^u_{\text{dd}}[k'-k, l'-l] \right)^* e^{-j 2 \pi \frac{l(k'-k)}{K}},
\end{eqnarray}$(k,l) \in {\mathcal S}$, where ${\mathcal S}$ is the support set of the effective channel filter $h_{dd}[k,l]$ (i.e., set of DD taps $(k,l)$ where $h_{dd}[k,l]$ has significant energy).
Substituting the expression of $y_{dd}[k,l]$ from (\ref{eqn265423}) into (\ref{eqnchest}) we get\footnote{\footnotesize{Here we use the standard result from \cite{Zak_OTFS_Identification}, \cite{Spreadpaper} that the cross-ambiguity between a quasi-periodic signal $x_{dd}[k,l]$ and $y_{dd}[k,l] = h_{dd}[k,l] *_{\sigma} x_{dd}[k,l]$ is $A_{y,x}[k,l] = h_{dd}[k,l] *_{\sigma} A_{x,x}[k,l]$, where $A_{x,x} = \sum\limits_{k'=0}^{M-1}\sum\limits_{l'= 0}^{N-1} x_{dd}[k',l'] \, x_{dd}[k' - k, l' - l] \, e^{j 2 \pi l \frac{(k' - k)}{MN}}$ is the auto-ambiguity function of $x_{dd}[k,l]$.}}
\begin{eqnarray}
\label{eqn34}
    A_{y,X^u}[k,l] & = & \sqrt{E_p} \, h_{dd}[k,l] *_{\sigma} A_{X^u,X^u}[k,l] \nonumber \\
    & &  \hspace{5mm} + \sqrt{E_d} A_{d,X^u}[k,l] + A_{n,X^u}[k,l], 
\end{eqnarray}where $A_{X^u,X^u}[k,l]$ is the auto-ambiguity function of the pilot signal $X^u_{dd}[k,l]$, $A_{d, X^u}[k,l]$ is the cross-ambiguity function between the received communication/information signal $h_{dd}[k,l] *_{\sigma} d_{dd}[k,l]$ and the pilot signal $X^u_{dd}[k,l]$, and $A_{n, X^u}[k,l]$ is the cross-ambiguity between the noise $n_{dd}[k,l]$ and the pilot signal. The auto-ambiguity function for $X^u_{dd}[k,l]$ is 
\begin{eqnarray}
    A_{X^u, X^u}[k,l] & & \nonumber \\
    & & \hspace{-25mm} \Define \hspace{-1.5mm} \sum_{k'=0}^{M-1} \sum_{l'=0}^{N-1} X_{\text{dd}}^u[k',l'] \left( X^u_{\text{dd}}[k'-k, l'-l] \right)^* e^{-j 2 \pi \frac{l(k'-k)}{K}}.
\end{eqnarray}The auto-ambiguity function $A_{X^u, X^u}[k,l]$ is supported on the line $l = - u k \, \text{mod} \, MN$ \cite{Mattu_ZC_DD}. Note that the point $(k,l) = (0,0)$ lies on this line and $A_{X^u, X^u}[0,0] = MN$.
The auto-ambiguity function of the ZC sequence for $M = 289, N=5$, $u=7$ is shown in Fig.~\ref{fig:zc_ambiguity}. As can be seen, the auto-ambiguity is supported on the line $l=-uk \bmod MN$.
\begin{figure}[htbp]
\centering
\includegraphics[width=\columnwidth]{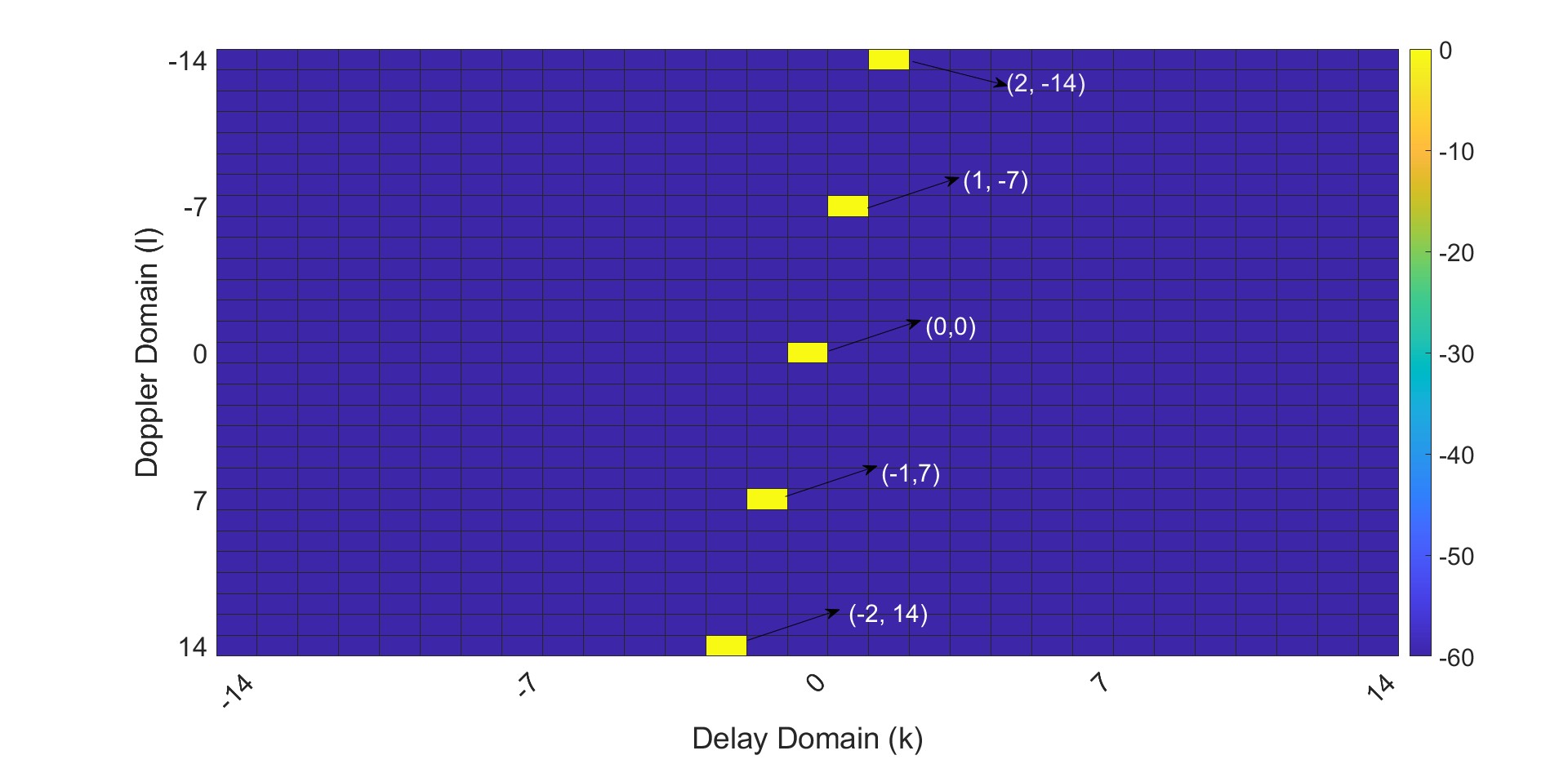}
\caption{Auto-ambiguity function $A_{X^u, X^u}[k,l]$ of the ZC sequence with $M = 289$, $N = 5$ and $u = 7$. The auto-ambiguity is strictly supported on the line $l=-uk \bmod MN$.}
\label{fig:zc_ambiguity}
\end{figure}

In (\ref{eqn34}), $h_{dd}[k,l] *_{\sigma} A_{X^u,X^u}[k,l]$ is the useful term for channel estimation. Expanding this term we get
\begin{eqnarray}
\label{eqn2864566}
h_{dd}[k,l] *_{\sigma} A_{X^u,X^u}[k,l] & = &  MN \, h_{dd}[k,l]  \nonumber \\
& & \hspace{-43mm} + \hspace{-6mm} \sum\limits_{\substack{(k', l') \ne (0,0) \\ l' = - u k' \, \text{mod} MN} } \hspace{-8mm} A_{X^u,X^u}[k',l'] \, h_{dd}[k - k' , l - l'] \, e^{j 2 \pi \frac{k'(l - l')}{MN}},
\end{eqnarray}where the first term is simply the effective DD domain channel filter and the second term is $h_{dd}[k,l]$ shifted by the DD points on the support of $A_{X^u, X^u}[k,l]$. We choose $u$ in such a way the support set of the first useful term (i.e., support ${\mathcal S}$ of $h_{dd}[k,l]$) does not overlap with the support of the second term and therefore
This is referred to as the channel being crystalline with respect to the support of the auto-ambiguity function $A_{X^u,X^u}[k,l]$.
\begin{figure}[htbp]
\centering
\includegraphics[width=\columnwidth]{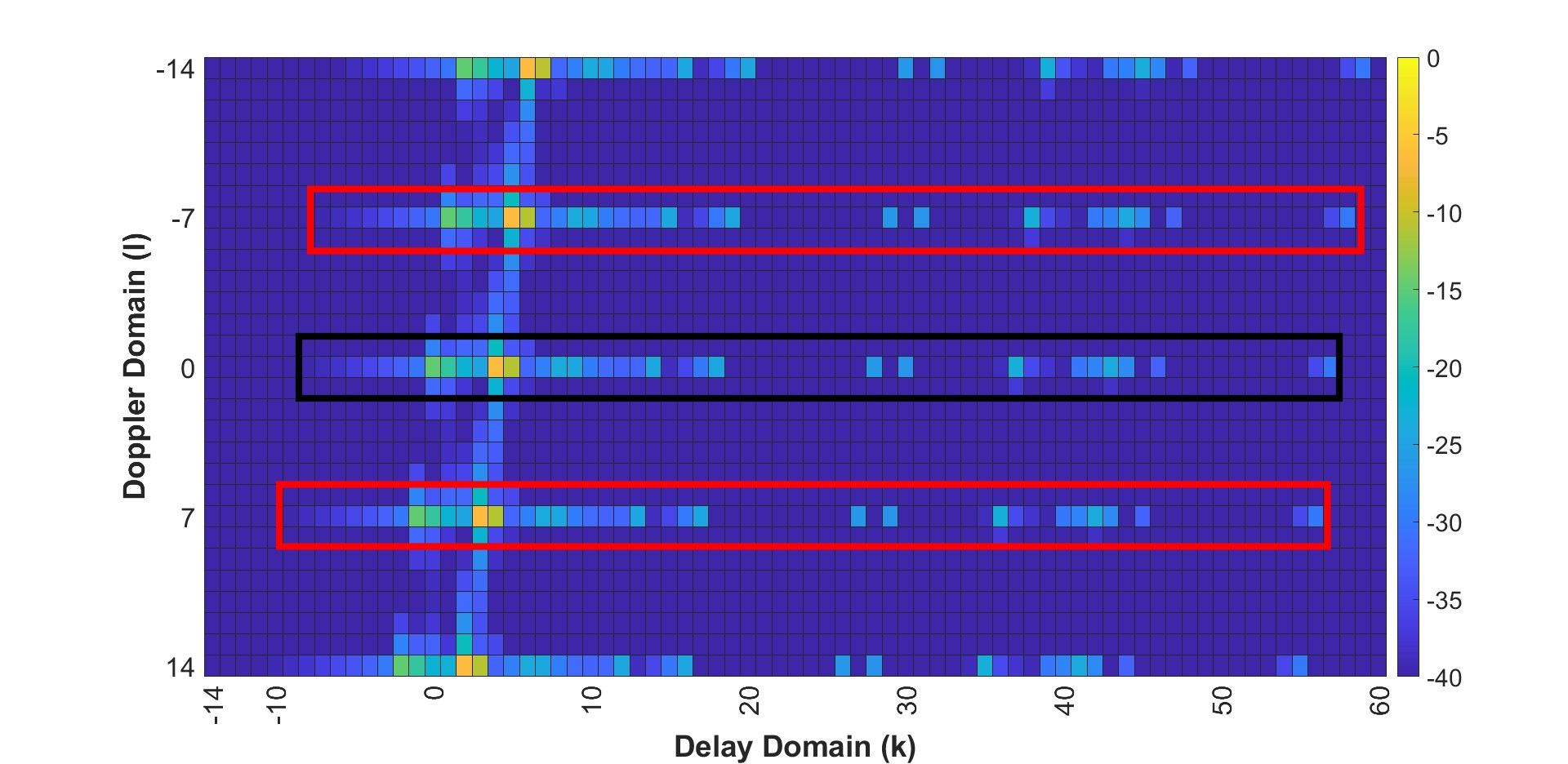}
\caption{Heatmap of $h_{dd}[k,l] *_{\sigma} A_{X^u,X^u}[k,l]$ with $M = 289$, $N = 5$ and $u = 7$.}
\label{fig:zc_crossambiguity}
\end{figure}
In Fig.~\ref{fig:zc_crossambiguity}
we plot the heatmap for $h_{dd}[k,l] *_{\sigma} A_{X^u,X^u}[k,l]$ for the 3GPP TDL-C channel model (non line-of-sight) \cite{3gpptr38901} where
$h_{\text{phy}}(\tau, \nu) = \sum\limits_{i=1}^{24} h_i \delta(\tau - \tau_i) \delta(\nu - \nu_i)$. Here $h_i$, $\tau_i$, $\nu_i$ is the complex gain, delay and Doppler shift of the $i$-th channel path. We model $\nu_i = \nu_{max} \cos(\theta_i),$ $i=1,2,\cdots, 24$ where $\nu_{max}$ is the maximum possible Doppler shift of any path and $\theta_i$, $i=1,2,\cdots, 24$ are i.i.d. uniformly in the interval $[0 \,,\, 2 \pi)$. In Fig.~\ref{fig:zc_crossambiguity} we choose $\nu_{max} = 2000$ Hz, i.e., a Doppler spread of $2 \nu_{max} = 4000$ Hz. Even for this high Doppler spread, it suffices to have $l_{max} = 1$, as is clear from the figure where the support set ${\mathcal S}$ (i.e., black rectangle) consists of DD taps with $l=-1, 0, 1$.
In (\ref{eqn2864566}), the first term is $MN h_{dd}[k,l]$, the support ${\mathcal S}$ for which is demarcated by a black rectangular boundary. The support for the other alias terms in (\ref{eqn2864566}) corresponding to the other points $(k',l') \ne (0,0)$ on the line $l = - u k \bmod MN$ is demarcated by red rectangles. Note that no rectangle overlaps with another rectangle, i.e., the channel is crystalline with respect to the auto-ambiguity function of the pilot signal.

Using (\ref{eqn2864566}) in (\ref{eqn34}) we finally have
\begin{eqnarray}
\label{eqn3737}
A_{y,X^u}[k,l] & = & \sqrt{E_p} \, MN \, h_{dd}[k,l] \nonumber \\
& & \hspace{-20mm} + \sqrt{E_p} \hspace{-8mm} \sum\limits_{\substack{(k', l') \ne (0,0) \\ l' = - u k' \, \text{mod} MN} } \hspace{-8mm} A_{X^u,X^u}[k',l'] \, h_{dd}[k - k' , l - l'] \, e^{j 2 \pi \frac{k'(l - l')}{MN}} \nonumber \\
    & & \hspace{-20mm} + \sqrt{E_d} A_{d,X^u}[k,l] + A_{n,X^u}[k,l], 
\end{eqnarray} and therefore an estimate of the effective DD domain channel filter is given by
\begin{eqnarray}
\label{eqn3838}
    \widehat{h}_{dd}[k,l] & \Define & \frac{A_{y, X^u}[k,l]}{MN \sqrt{E_p}},
\end{eqnarray}$(k,l) \in {\mathcal S}$.
From (\ref{eqnchest}) it follows that the complexity of computing $A_{y, X^u}[k,l]$ is therefore $O(\vert {\mathcal S}\vert MN)$,
where $\vert {\mathcal S}\vert$ is the cardinality of ${\mathcal S}$, i.e., the number of DD taps in the support set of $h_{dd}[k,l]$. 
In Section \ref{subsecnmse} we study the mean squared error of this estimate in greater detail.

\subsubsection{Receiver Step-$3$: Conversion of DD Domain Channel Estimate to Frequency Domain}
Using (\ref{eqn2864793}), an estimate of the frequency domain channel coefficients ${\widehat H}_{f}[i,l]$ is given by
\begin{eqnarray}
    {\widehat H}_f[i,l] & \Define & \sum\limits_{k' = 0}^{MN -1} \, {\widehat h}[k' , l] \, e^{-j 2 \pi \frac{i k'}{MN}}, \nonumber \\
    {\widehat h}[k' , l] & \Define & \sum\limits_{n,m \in {\mathbb Z}} {\widehat h}_{dd}[k' + nMN \,,\, l + mMN].
\end{eqnarray}Since the effective DD domain channel filter $h_{dd}[k,l]$ has significant energy for values of $(k,l) \in {\mathcal S}$, we compute ${\widehat H}_f[i,l]$, $i=0,1,\cdots, MN-1$ only for $l=-l_{max}, \cdots, l_{max}$ where the choice of $l_{max}$ depends on the Doppler spread of $h_{dd}[k,l]$ which in turn depends on $\nu_{max}$. We choose $l_{max} \leq \lceil T \nu_{max} \rceil$ such that the achieved NMSE does not improve appreciably on increasing it further. Note that $\widehat{H}_f[i,l]$ is also periodic in $l$ with period $MN$ and we set $\widehat{H}_f[i,l] = 0$ if $l_{max} < l \bmod MN < MN - l_{max}$, i.e.
\begin{eqnarray}
\label{eqn29746}
    {\widehat H}_f[i,l] & \hspace{-3mm} = & \hspace{-3mm}\begin{cases}
\sum\limits_{k' = 0}^{MN -1} \, {\widehat h}[k' , l] \, e^{-j 2 \pi \frac{i k'}{MN}} &, l \bmod MN \leq l_{max}\\
& \hspace{-15mm} \text{or}  \,\,  l \bmod MN \geq MN - l_{max}, \\
0 &, \text{otherwise}.
    \end{cases}
\end{eqnarray}Note that for a given $l$, $\widehat{H}_f[i,l]$, $l=0,1,\cdots, MN-1$ can be computed
through a $MN$-point Fast Fourier Transform (FFT) of $\widehat{h}[k,l]$, $k=0,1,\cdots, MN-1$ which has complexity only $O(MN\log (MN))$, and therefore the overall complexity of computing $\widehat{H}_f[i,l]$ is $O(l_{max}MN \log(MN))$.

\subsection{Receiver Step-$4$: Cancellation of ZC pilot in FD:}
From the I/O relation in (\ref{eqn2323}) of Theorem \ref{thm1}
it is clear that an estimate of the received pilot signal on the $i$-th sub-carrier is
\begin{eqnarray}
    \sum\limits_{l = - l_{max}}^{l_{max}} \hspace{-2mm} {\widehat H}_f[i,l] \, p[i - l],
\end{eqnarray}$i=0,1,\cdots, MN-1$.
Note that $H_f[i,l]$ is periodic in $l$ with period $MN$ (see Theorem \ref{thm1}) and therefore, the summation in the RHS of the I/O relation can be over any period of $MN$ values of $l$. In fact, for $l=-l_{max}, \cdots, -1$, $H_f[i,l] = H_f[i, l+MN]$, where $(l+MN)$ falls in the range $[0 \,,\, MN-1]$. 
This estimate is then canceled from the received FD signal resulting an almost ``data-only" signal
\begin{eqnarray}
\label{eqndataonly}
    Y_{data}[i] & \Define & Y[i] - \sqrt{E_p}\hspace{-2mm} \sum\limits_{l = - l_{max}}^{l_{max}} \hspace{-2mm} {\widehat H}_f[i,l] \, p[i - l],
\end{eqnarray}$i=0,1,\cdots, MN-1$.
The complexity of computing $Y_{data}[i]$, $i=0,1,\cdots, MN-1$ is therefore $O(l_{max} MN)$.

\subsection{Receiver Step-$5$: Frequency Domain Equalization and data detection}
From (\ref{eqn2323}) and (\ref{eqndataonly}), it follows that the data-only signal contains the useful data signal, residual pilot signal (since cancellation in step-$4$ is not perfect), channel estimation error signal and AWGN, i.e.
\begin{eqnarray}
\label{eqn1835}
 Y_{data}[i] & = &  \sqrt{E_d} \sum\limits_{l = - l_{max}}^{l_{max}}  \hspace{-2mm} \widehat{H}_f[i,l] \, d[i -l] \, + \, W[i],
\end{eqnarray}$i=0,1,\cdots, MN-1$, where
\begin{eqnarray}
    W[i] & \Define & \underbrace{\sqrt{E_d} \sum\limits_{l=0}^{MN-1} \hspace{-2mm} (H_f[i,l] - \widehat{H}_f[i,l]) d[i - l] }_{\text{Channel estimation error}}\nonumber \\
    & & \hspace{-20mm} + \, \underbrace{\sqrt{E_p} \sum\limits_{l=0}^{MN-1} \hspace{-2mm} (H_f[i,l] - \widehat{H}_f[i,l]) p[i - l]}_{\text{Residual pilot signal}} \, + \, N[i].
\end{eqnarray}
Organizing $Y_{data}[i], i=0,1,\cdots, MN-1$ into a vector ${\bf y}_{data} = (Y_{data}[0], Y_{data}[1], \cdots, Y_{data}[MN-1])^T \in {\mathbb C}^{MN \times 1}$, the I/O relation in (\ref{eqn1835}) is transformed into a matrix-vector form
\begin{eqnarray}
    {\bf y}_{data} & = & \sqrt{E_d} \, \widehat{\bf H} \, {\bf d} \, + \, {\bf w},
\end{eqnarray}where ${\bf d} = (d[0], d[1], \cdots, d[MN-1])^T$ is the vector of FD information symbols and ${\bf w} = (W[0], W[1], \cdots, W[MN-1])^T$. The element of the $MN \times MN$ matrix $\widehat{\bf H}$ in its $i$-th row and $l$-th column is ${\widehat H}_f[i,i-l]$. The matrix $\widehat{\bf H}$ is banded since $\widehat{H}_f[i,l]$ has non-zero values only for $l \bmod MN = 0 \cdots, l_{max}$ and $l \bmod MN = MN - l_{max} \cdots, MN -1$. Hence joint MMSE equalization of all $MN$ sub-carriers has complexity only $O(M^2N^2)$.
The equalized data symbols are then demapped to bits which are input to a FEC decoder (same as FEC used at the transmitter). The decoded bits are then mapped back to the data symbols denoted by
\begin{eqnarray}
    \widehat{d}[i], i=0,1,\cdots, MN-1.
\end{eqnarray}
\subsection{Iterative cancellation of received pilot and data}
The estimated information symbols $\widehat{d}[i]$, $i=0,1,\cdots, MN-1$ are periodized with period $MN$  is used to compute an estimate of the received data signal
\begin{eqnarray}
\label{eqn29748}
    \sqrt{E_d} \sum\limits_{l=-l_{max}}^{l_{max}} \widehat{H}_f[i,l] \, \widehat{d}[i-l],
\end{eqnarray}$i=0,1,\cdots, MN-1$. This is then subtracted from the
received signal resulting in the ``pilot-only" signal
\begin{eqnarray}
\label{eqn924801}
Y_p[i] & = & Y[i] - \sqrt{E_d} \sum\limits_{l=-l{max}}^{l_{max}} \widehat{H}_f[i,l] \, \widehat{d}[i-l].
\end{eqnarray}Computing $Y_p[i], i=0,1,\cdots, MN-1$ has complexity $O(l_{max} MN) $. This is then converted to a pilot only DD domain received signal (same as Step-$1$)
\begin{eqnarray}
    y_{p,dd}[k,l] & \Define & \text{DFZT}(Y_p[i]).
\end{eqnarray}A better estimate of the effective DD domain channel filter is then given by the cross-ambiguity between $y_{p,dd}[k,l]$ and $X^u_{dd}[k,l]$ (similar to Step-$2$), followed by conversion of the estimate to the frequency domain channel coefficients (see Step-$3$).
This is followed by estimation and cancellation of the received pilot signal resulting in a data-only signal from which the data symbols are detected (Step-$4$ and Step-$5$).
The detected information symbols are then used to estimate the received data signal (as in (\ref{eqn29748})) followed by its cancellation from the received signal (as in \ref{eqn924801}). These steps are repeated in an iterative manner resulting in improvement in NMSE and effective spectral efficiency with increasing number of iterations (see Fig.~\ref{fig:receiver}). Usually three to four iterations result in most of the improvement in performance with subsequent iterations resulting in diminishing improvements. 

\subsection{Receiver complexity}
The complexity of a traditional CP-OFDM receiver with per-carrier equalization (i.e., separate detection of data transmitted on each sub-carrier) is $O(MN \log (MN))$ (since there are $MN$ sub-carriers). In comparison, the complexity of the proposed DD domain sensing based CP-OFDM receiver is $O(M^2N^2)$ (steps $1-5$ and iterations between data and pilot cancellation/estimation). The proposed DD domain sensing based CP-OFDM receiver has higher complexity than a traditional CP-OFDM receiver, since it performs joint equalization of all $MN$ carriers instead of per-carrier equalization. Joint equalization improves robustness to channel Doppler spread and therefore the proposed DD domain sensing based CP-OFDM receiver achieves significantly better spectral efficiency than traditional CP-OFDM in high Doppler spread scenarios (see simulation results in Section \ref{simsec}).

Effectiveness of joint equalization in achieving robustness to channel Doppler spread is only because the proposed approach is able to estimate the frequency domain I/O relation $H_f[i,l]$ from an estimate of the effective DD domain channel filter $h_{dd}[k,l]$ through (\ref{eqn29746}). In traditional CP-OFDM it is not possible to estimate $H_f[i,l]$ for all $l$, since channel coefficient is estimated only on the sub-carrier where it is transmitted (i.e., $H_f[i,l]$ is estimated only for $l=0$ with $i$ being the index of pilot sub-carriers).

\subsection{On the accuracy of channel estimation in DD domain}
\label{subsecnmse}
Technically, we say that the channel is crystalline if the mean squared value of the aliasing term (second term in the RHS of (\ref{eqn2864566})) is smaller than the variance of the data and AWGN terms above, so that the estimation accuracy is not limited by the aliasing term, i.e. for all $(k,l) \in {\mathcal S}$
\begin{eqnarray}
\label{eqn3939}
    E_p {\Big \vert} \hspace{-7mm} \sum\limits_{\substack{(k', l') \ne (0,0) \\ l' = - u k' \, \text{mod} MN} } \hspace{-8mm} A_{X^u,X^u}[k',l'] \, h_{dd}[k - k' , l - l'] \, e^{j 2 \pi \frac{k'(l - l')}{MN}} {\Big \vert}^2 & & \nonumber \\
    & & \hspace{-75mm} \ll E_d \, {\mathbb E}\left[ \vert A_{d, X^u}[k,l] \vert^2 \right] \, + \, {\mathbb E}\left[ \vert A_{n, X^u}[k,l] \vert^2 \right].
\end{eqnarray}
When this condition is satisfied, we can estimate $h_{dd}[k,l]$ from the taps of $A_{y,X^u}[k,l]$ restricted to $(k,l) \in {\mathcal S}$. Next, we analyze the normalized mean squared estimation error (NMSE)

{\vspace{-4mm}
\small
\begin{eqnarray}
\label{eqn4040}
    \text{NMSE} & \Define & \frac{{\mathbb E}\left[ \sum\limits_{(k,l) \in {\mathcal S}}  \left\vert \widehat{h}_{dd}[k,l] -  h_{dd}[k,l] \right\vert^2\right]}{ {\mathbb E}\left[ \sum\limits_{(k,l) \in {\mathcal S}}  \left\vert h_{dd}[k,l] \right\vert^2\right]}.
\end{eqnarray}\normalsize}In the crystalline regime, from (\ref{eqn3737}), (\ref{eqn3838}) and (\ref{eqn3939}) it follows that

{\vspace{-4mm}
\small
\begin{eqnarray}
\label{eqn4141}
{\mathbb E}\left[ \sum\limits_{(k,l) \in {\mathcal S}}  \left\vert \widehat{h}_{dd}[k,l] -  h_{dd}[k,l] \right\vert^2\right] & & \nonumber \\
& & \hspace{-55mm} \approx \frac{E_d \sum\limits_{(k,l) \in {\mathcal S}} {\mathbb E}\left[ \vert A_{d, X^u}[k,l] \vert^2 \right]}{M^2N^2 E_p} \,  \, + \, \frac{\sum\limits_{(k,l) \in {\mathcal S}} {\mathbb E}\left[ \vert A_{n, X^u}[k,l] \vert^2 \right]}{M^2N^2E_p}.
\end{eqnarray}\normalsize}Since the data symbols $d[i], =0,1,\cdots, MN -1$ are i.i.d. unit energy and statistically independent of the sensing signal, we get (see Theorem $4$ in \cite{Spreadpaper})
\begin{eqnarray}
\label{eqn4242}
    \sum\limits_{(k,l) \in {\mathcal S}} {\mathbb E}\left[ \vert A_{d, X^u}[k,l] \vert^2 \right] & = & MN \vert {\mathcal S} \vert \hspace{-2mm} \sum\limits_{(k,l) \in {\mathcal S}} \hspace{-3mm} \vert h_{dd}[k,l] \vert^2,
\end{eqnarray}Similarly
\begin{eqnarray}
\label{eqn4343}
    \sum\limits_{(k,l) \in {\mathcal S}} {\mathbb E}\left[ \vert A_{n, X^u}[k,l] \vert^2 \right] & = & MN \, N_0 \, \vert {\mathcal S} \vert, \nonumber \\
    N_0 & \Define & {\mathbb E}\left[ \left\vert n_{dd}[k,l] \right\vert^2 \right].
\end{eqnarray}Using (\ref{eqn4242}) and (\ref{eqn4343}) in (\ref{eqn4141}) and (\ref{eqn4040}) we get
\begin{eqnarray}
\label{eqn4444}
\text{NMSE} & = & \frac{\vert {\mathcal S} \vert}{MN} \left( \frac{1}{\eta} + \frac{1}{\gamma_p} \right), \nonumber \\
\gamma_p & \Define & \frac{E_p \sum\limits_{(k,l) \in {\mathcal S}} \vert h_{dd}[k,l] \vert^2 }{N_0 } \,,\, 
\eta \, \Define \,  \frac{E_p}{E_d},
\end{eqnarray}where $\eta$ is the PDR and $\gamma_p$ is the pilot to noise ratio at the receiver. 
\begin{figure}[htbp]
\centering
\includegraphics[width=3.5in, height=2.0in]{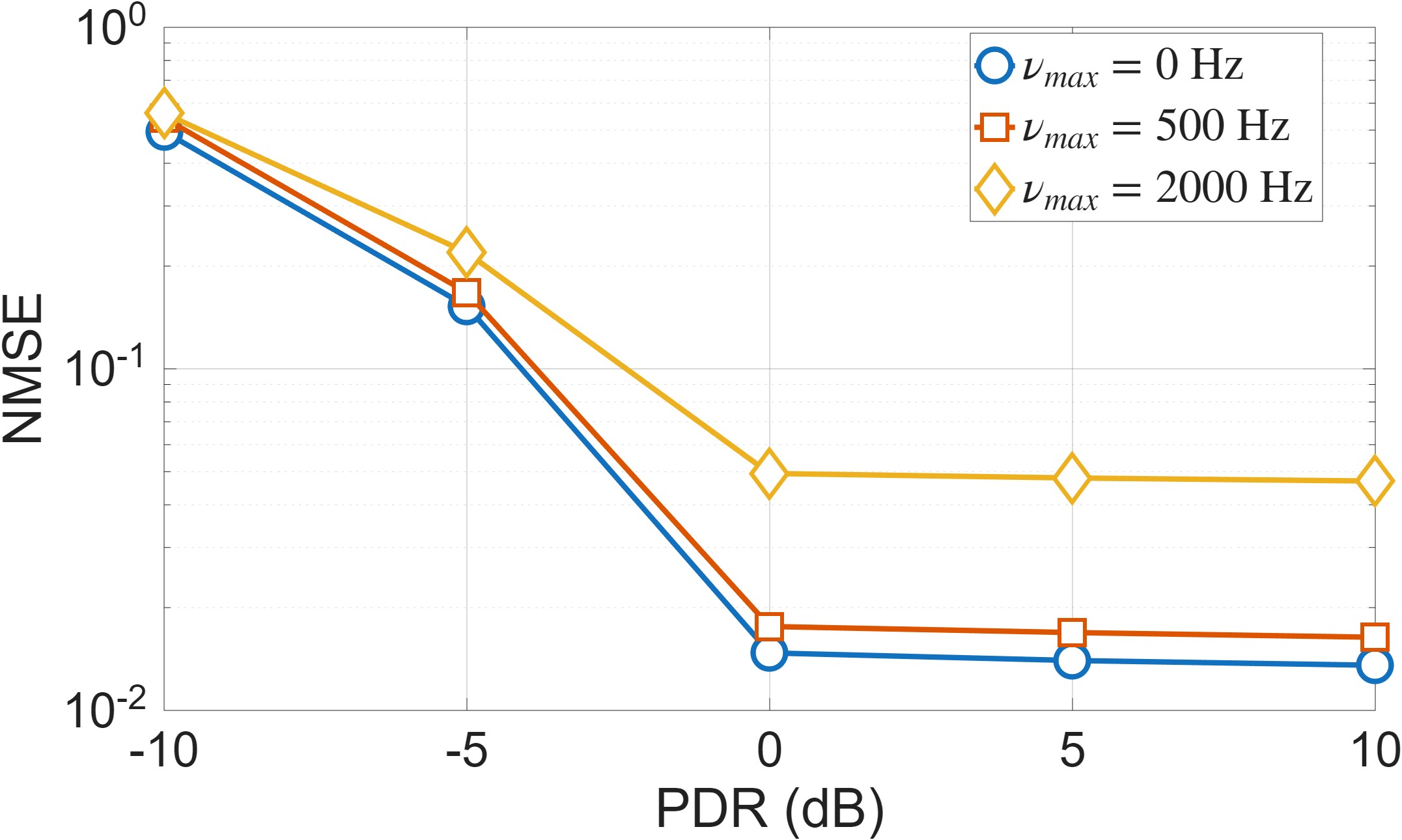}
\caption{NMSE vs PDR for a fixed total transmitted power (both pilot and data signal) to noise ratio of $12$ dB. 3GPP TDL-C (non-line-of-sight) channel model. $M = 289, N=5$, $u=7$.}
\label{fignmse}
\end{figure}
In Fig.~\ref{fignmse} we plot the NMSE for the 3GPP TDL-C (non-line-of-sight) channel model as considered in Fig.~\ref{fig:zc_crossambiguity}. We consider scaled channel path gains so that ${\mathbb E}[\sum\limits_{(k,l) \in {\mathcal S}} \vert h_{dd}[k,l] \vert^2] = 1 $.
Also, $M = 289, N=5$, $u=7$.
Fixed total transmit power to noise ratio (TPNR) is $12$ dB. 
TPNR is defined to be the total transmit power (both data and pilot combined) to the noise power, i.e., $(E_p + E_d)/N_0 = \gamma_p ( 1 + 1/\eta)$ since ${\mathbb E}[\sum\limits_{(k,l) \in {\mathcal S}} \vert h_{dd}[k,l] \vert^2] = 1 $.
The data signal to noise ratio (data SNR) denoted by $\gamma_d$ is defined to be $E_d/N_0 = \frac{E_p}{N_0} \frac{1}{\eta} = \gamma_p/\eta$.
\begin{eqnarray}
\label{eqn103659}
    \text{TPNR} & \Define & \gamma_p \left( 1 + \frac{1}{\eta} \right),\, \gamma_d \, \Define \, \frac{\gamma_p}{\eta}.
\end{eqnarray}In Fig.~\ref{fignmse},
for a given $\nu_{max}$, with increasing PDR $\eta$, NMSE decreases and saturates for high PDR. This is expected since $\text{TPNR}$ is fixed and from (\ref{eqn103659}) it follows that as $\eta \rightarrow \infty$, $\gamma_p \rightarrow TPNR$. Further, from (\ref{eqn4444}) it follows that as $\eta \rightarrow \infty$, NMSE converges to a fixed value $\text{NMSE} \rightarrow \frac{\vert {\mathcal S} \vert }{MN \, \text{TPNR}}$. Note that the simulated NMSE is higher than this lower bound NMSE expression since this expression assumes no aliasing (second term in (\ref{eqn2864566}) being zero), whereas due to channel Doppler spread there is always a finite amount of aliasing, and also due to sinc pulse shaping there is always a small fraction of the total channel energy $\sum\limits_{k,l \in {\mathbb Z}} \vert h_{dd}[k,l] \vert^2$ which leaks outside the finite support set ${\mathcal S}$.

For a given PDR $\eta$, NMSE increases with increasing maximum Doppler shift $\nu_{max}$. This is because, with increasing Doppler spread, both aliasing and the amount of channel energy outside ${\mathcal S}$ increases resulting in degradation in NMSE.

\section{Simulation Results}
\label{simsec}
We compare the spectral efficiency (SE) achieved with the proposed DD domain sensing based CP-OFDM to that achieved with traditional CP-OFDM.
In the proposed DD domain sensing based CP-OFDM we perform joint equalization of all sub-carriers, whereas we perform separate equalization of each sub-carrier in traditional CP-OFDM.

We consider the 3GPP TDL-C (non-line-of-sight) channel model as considered in Fig.~\ref{fig:zc_crossambiguity}.
For traditional CP-OFDM, we consider the 3GPP 5G NR numerology with $\Delta f = 15$ kHz and a total of $K = 1445$ subcarriers. The cyclic prefix duration is $T_{cp} = 4.7 \mu s$. For traditional CP-OFDM estimation, some of the carriers are used as pilot carrier (we only simulate one OFDM symbol). The channel estimate acquired on the pilot carriers is interpolated to obtain an estimate for the other data carriers. For both traditional CP-OFDM and the proposed DD domain sensing based CP-OFDM, the total bandwidth is $B = K \Delta f = 21.675$ MHz and duration is $(T +T_{cp}) = 71.36 \mu s$.

For the proposed DD domain sensing based CP-OFDM we consider the Zak-OTFS over CP-OFDM architecture as described in Section \ref{embda}. The Zak-OTFS parameters are $\nu_p = 75$ kHz, $\tau_p = 1/nu_p = 13.33 \mu s$, which implies $M = B \tau_p = 289, N= T \nu_p = 5$. In this approach, since we transmit a ZC pilot signal ($u=7$) overlaid on top of the CP-OFDM communication signal, we do not need pilot sub-carriers and therefore all $K = MN$ sub-carriers carry data symbols.

The information symbols in both methods are jointly coded using LDPC (FEC) as specified in 3GPP 5G NR \cite{3gppmcs2}. We compute the effective spectral efficiency for both traditional CP-OFDM and DD domain sensing based CP-OFDM as follows. 
For a given TPNR we optimize the PDR so as to be able to achieve the highest possible MCS combination (modulation and coding) with an average LDPC block error rate (BLER) at most $0.1$. Note that each MCS combination defines a choice of the modulation order and LDPC code rate \cite{3gppmcs1, 3gppmcs2}. For traditional CP-OFDM we also optimize the MCS w.r.t. the pilot arrangements in 3GPP 5G NR. The effective spectral efficiency (SE) is given by $N_I (1 - BLER)/(B (T + T_{cp}))$ bits/sec/Hz, where $N_I$ is the number of information bits corresponding to the highest achievable MCS.
\begin{figure}[htbp]
\hspace{-1mm}
\includegraphics[width=9.1cm, height=6cm]{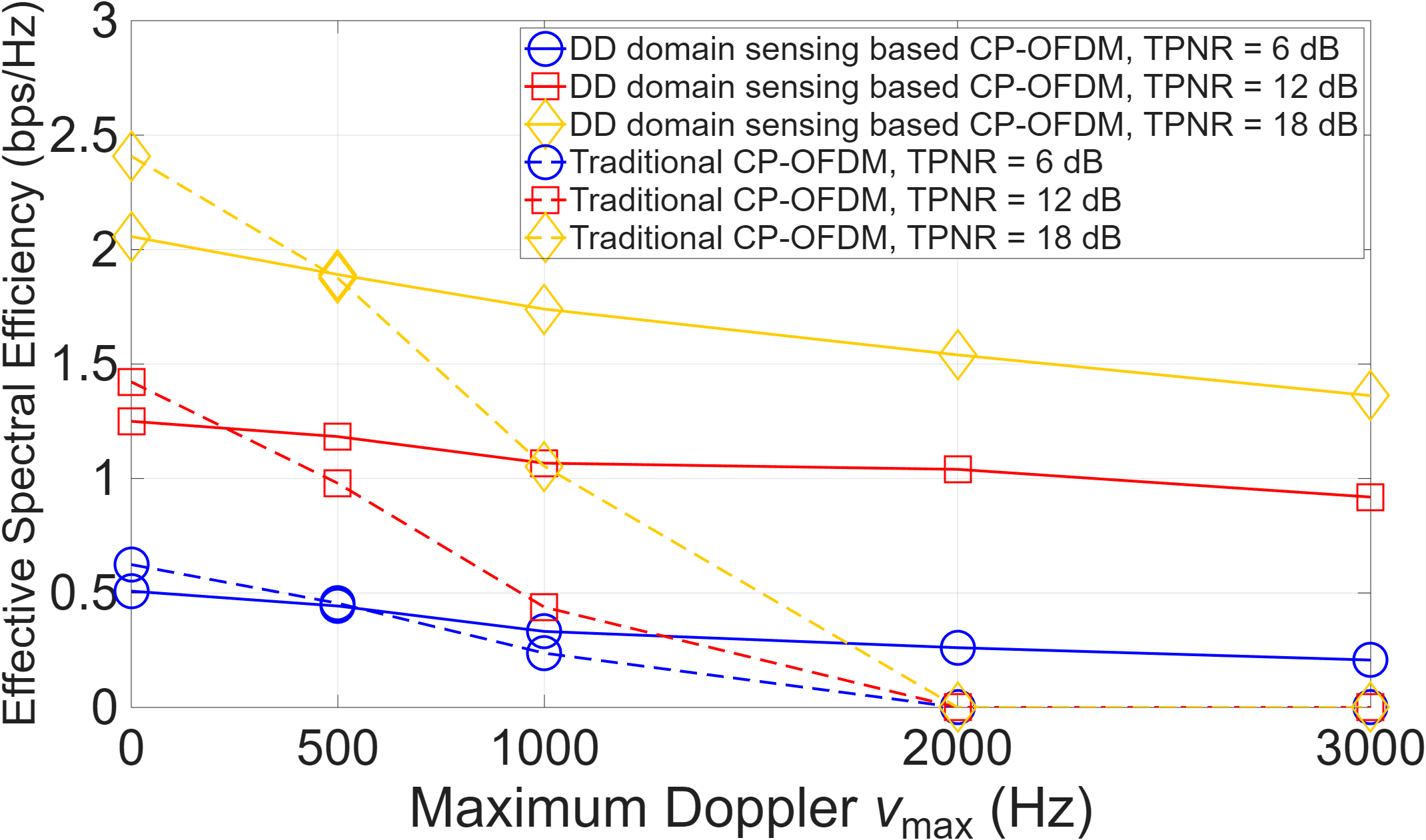}
\caption{Effective Spectral Efficiency (SE) versus Maximum Doppler Shift.}
\label{fig:se_doppler}
\end{figure}

In Fig.~\ref{fig:se_doppler} we plot the SE as a function of increasing $\nu_{max}$. It is observed that for low mobility scenarios ($\nu_{max} < 500$ Hz), traditional CP-OFDM achieves better SE than the proposed DD domain sensing based CP-OFDM. This is because, for small $\nu_{max}$ the ICI between sub-carriers is small and therefore SE performance of traditional CP-OFDM does not degrade. At the same time, the proposed DD domain sensing based CP-OFDM allocated lesser power to the data symbols as compared to traditional CP-OFDM since a significant fraction of the total power is used by the overlaid pilot signal (the optimal PDR which maximizes SE is around $0$ dB). However with increasing $\nu_{max}$,
the SE performance of traditional CP-OFDM degrades severely with no MCS being achievable (i.e., BLER greater than $0.1$) for $\nu_{max} \geq 2000$ Hz. At the same time, the SE performance of DD domain sensing based CP-OFDM degrades marginally with increasing $\nu_{max}$. The marginal degradation is due to the reduction in channel estimation accuracy with increasing $\nu_{max}$ (see Fig.~\ref{fignmse}). In high mobility scenarios, the proposed DD domain sensing based CP-OFDM achieves higher SE than traditional CP-OFDM since it equalizes the effect of inter-carrier interference through joint equalization of all sub-carriers, something which is made possible only because of the availability of accurate FD channel coefficients obtained through DD domain based channel estimation.
\begin{figure}[htbp]
\hspace{-1mm}
\includegraphics[width=9.7cm, height=6cm]{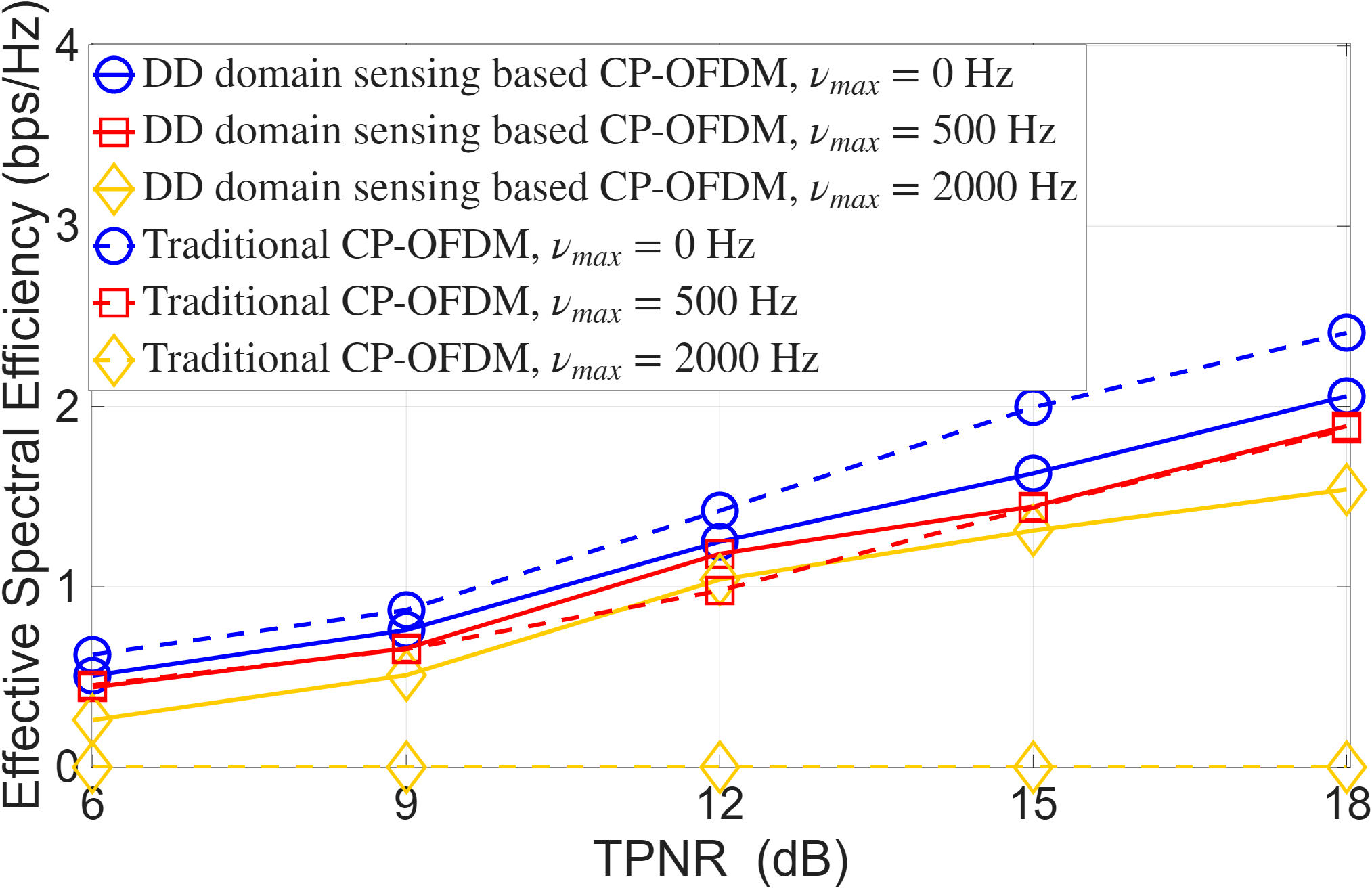}
\caption{Effective Spectral Efficiency (SE) versus TPNR.}
\label{fig:se_tpnr}
\end{figure}

In Fig.~\ref{fig:se_tpnr} we plot the SE as a function of increasing TPNR for a fixed $\nu_{max}$. For $\nu_{max} = 0$ Hz (no mobility), traditional CP-OFDM achieves higher SE compared to DD domain sensing based CP-OFDM. However, for higher mobility, the proposed DD domain sensing based CP-OFDM achieves better SE performance than traditional CP-OFDM.




\appendices

\section{Proof of Theorem \ref{thm1}}
\label{apa}
The Inverse Discrete Frequency Zak Transform (IDFZT) takes as input the DD domain representation of a signal and gives its discrete FD representation. In (\ref{sieqn12}), the sum in the R.H.S. can be over any delay period, i.e., for any $k' \in {\mathbb Z}$
\begin{eqnarray}
\label{eqn13846}
        S[i] & = & \text{IDFZT}(x_{dd}[k,l]) \nonumber \\
    & = & \frac{1}{\sqrt{M}} \sum\limits_{k=0}^{M-1} x_{dd}[k,i] \, e^{-j 2 \pi \frac{i k}{MN}}, \nonumber \\
    & = & \frac{1}{\sqrt{M}} \sum\limits_{k=k'}^{M-1+k'} x_{dd}[k,i] \, e^{-j 2 \pi \frac{i k}{MN}}.
\end{eqnarray}This is because, $x_{dd}[k,i] \, e^{-j 2 \pi \frac{i k}{MN}}$ is periodic in discrete delay index $k$ with period $M$, i.e., for any $n \in {\mathbb Z}$
\begin{eqnarray}
\label{eqn29845}
x_{dd}[(k+nM),i] \, e^{-j 2 \pi \frac{i (k+nM)}{MN}} & & \nonumber \\
& & \hspace{-31mm} \mya x_{dd}[k,i] \, e^{j 2 \pi \frac{i n}{N}} \, e^{-j 2 \pi \frac{i (k+nM)}{MN}}, \nonumber \\
& & \hspace{-31mm} = x_{dd}[k,i] \, e^{-j 2 \pi \frac{i k}{MN}},
\end{eqnarray}where step (a) follows from the
quasi-periodicity of $x_{dd}[k,l]$ in (\ref{qpeqn234}). Taking IDFZT of both sides of
the DD domain I/O relation in (\ref{eqniorel1}) we get
\begin{eqnarray}
\label{eqn28478}
    Y[i] & \mya & \text{IDFZT}\left( y_{dd}[k,l] \right), \nonumber \\
    & = &  \text{IDFZT}\left( h_{dd}[k,l] *_{\sigma} x_{dd}[k,] \right) \, + \,  N[i], \nonumber \\
    N[i] & \Define & \text{IDFZT}\left( n_{dd}[k,l] \right),
\end{eqnarray}where step (a) follows from the fact that $Y[i]$ is the FD representation of the received DD domain signal $y_{dd}[k,l]$ (see (\ref{seqn12988})). Also, $N[i]$ is the AWGN on the $i$-th subcarrier. Further from the expression for twisted convolution in (\ref{eqniorel2}) we get
\begin{eqnarray}
\label{eqni8264580}
    \text{IDFZT}\left(h_{dd}[k,l] \, *_{\sigma} \, x_{dd}[k,l]\right) &  & \nonumber \\
    & & \hspace{-50mm} = \hspace{-3mm} \sum\limits_{k',l' \in {\mathbb Z}} \hspace{-3mm} h_{dd}[k',l'] \, \text{IDFZT}\left( x_{dd}[k - k', l - l'] \, e^{j 2 \pi l' \frac{(k - k')}{MN}} \right).
\end{eqnarray}From (\ref{eqn13846}) it follows that
\begin{eqnarray}
    \text{IDFZT}\left( x_{dd}[k - k', l - l'] \, e^{j 2 \pi l' \frac{(k - k')}{MN}} \right) & & \nonumber \\
    & & \hspace{-60mm} = \frac{1}{\sqrt{M}}  \hspace{-3mm} \sum\limits_{k=k'}^{M-1+k'} \hspace{-5mm} x_{dd}[k - k', i - l'] \, e^{j 2 \pi l' \frac{(k - k')}{MN}} \, e^{-j 2 \pi \frac{i k}{MN}}, \nonumber \\
    & & \hspace{-60mm} = e^{-j 2 \pi \frac{i k'}{MN}} \, \underbrace{\frac{1}{\sqrt{M}}  \hspace{-1mm} \sum\limits_{k=0}^{M-1} \hspace{-1mm} x_{dd}[k, i - l'] \, e^{-j 2 \pi (i - l') \frac{k}{MN}}}_{= S[i - l']}, \nonumber \\
    & & \hspace{-60mm} = e^{-j 2 \pi \frac{i k'}{MN}}  \, S[ i - l'].
\end{eqnarray}Using this in (\ref{eqni8264580}) we then get

{\small
\vspace{-4mm}
\begin{eqnarray}
\label{eqn28747}
    \text{IDFZT}\left(h_{dd}[k,l] \, *_{\sigma} \, x_{dd}[k,l]\right) & \hspace{-3mm} = & \hspace{-4mm} \sum\limits_{k',l' \in {\mathbb Z}} \hspace{-3mm} e^{-j 2 \pi \frac{i k'}{MN}} \, h_{dd}[k',l'] \, S[i - l'],  \nonumber \\
    & & \hspace{-40mm} \mya \hspace{-1mm} \sum\limits_{k''=0}^{MN -1} \sum\limits_{l'' = 0}^{MN-1} \sum\limits_{n,m \in {\mathbb Z}} \hspace{-1mm} {\Big [} e^{-j 2 \pi \frac{i k''}{MN}} h_{dd}[k'' + nMN, l'' + mMN] \nonumber \\
    & & S[i- l'' - mMN] {\Big ]}, \nonumber \\
    & & \hspace{-40mm} = \hspace{-1mm}
    \sum\limits_{l''=0}^{MN-1} S[i - l''] \, H_f[i, l''], \nonumber \\
    & & \hspace{-45mm} H_f[i, l''] \Define 
    \hspace{-2mm} \sum\limits_{k''=0}^{MN-1} \hspace{-2mm} e^{-j 2 \pi \frac{i k''}{MN}} \, h[k'', l''], \nonumber \\
    & & \hspace{-45mm} h[k'', l''] \Define  \sum\limits_{n,m \in {\mathbb Z}} h_{dd}[k'' + nMN, l'' + mMN].
\end{eqnarray}\normalsize}where $H_f[\cdot, \cdot]$
is given by (\ref{eqn2864793}). Further, since $h[k,l]$ in (\ref{eqn2864793}) is periodic in both delay and Doppler with period $MN$, it follows that $H_f[i,l]$ is also $MN$-periodic in both $i$ and $l$.


\begin{thebibliography}{1}


\bibitem{IMT2030}
``Framework and overall objectives of the 
future development of IMT for 2030 and 
beyond," \emph{Recommendation ITU-R M.2160-0 }, International Telecommunication Union (ITU) - R, Nov. 2023.

\bibitem{OnTheRoadTo6G}
C. X. Wang et al., ``On the Road to 6G: Visions, Requirements, Key Technologies, and Testbeds,'' \emph{IEEE Communications Surveys \& Tutorials}, vol. 25, no. 2, pp. 905--974, 2023.



\bibitem{Nee2000} R.V.Nee, and R. Prasad, ``OFDM for Wireless Multimedia Communications,'' \emph{Artech House Inc.}, 2000.

\bibitem{Wang2006}  T.Wang, J. G.  Proakis, E . Masry and J. R.  Zeidler, ``Performance
degradation of OFDM systems due to Doppler spreading,'' \emph{IEEE Trans. on Wireless Commun.}, vol.5, no.6, June 2006.

\bibitem{Bello}
P.~A.~Bello, ``Characterization of Randomly Time-Variant Linear Channels," {\em IEEE Trans. Comm. Syst.}, vol. 11, pp. 360-393, 1963. 

\bibitem{Hlawatsch2011}
F. Hlawatsch and G. Matz, \emph{Wireless Communications Over Rapidly Time-Varying Channels}. New York, NY, USA: Academic Press, 2011.



\bibitem{zakotfs1}
S. K. Mohammed, R. Hadani, A. Chockalingam, and R. Calderbank, ``OTFS—A Mathematical Foundation for Communication and Radar Sensing in the Delay-Doppler Domain,'' \emph{IEEE BITS the Information Theory Magazine}, vol. 2, no. 2, pp. 36--55, 2022.

\bibitem{zakotfs2}
S. K. Mohammed, R. Hadani, A. Chockalingam, and R. Calderbank, ``OTFS—Predictability in the Delay-Doppler Domain and Its Value to Communication and Radar Sensing,"  \emph{IEEE BITS the Information Theory Magazine}, vol. 3, no. 2, pp. 7-31, June 2023.

\bibitem{otfsbook}
S. K. Mohammed, R. Hadani, and A. Chockalingam, \emph{OTFS Modulation: Theory and Applications}. Hoboken, NJ, USA: Wiley-IEEE Press, 2024.


 

\bibitem{Zak_OTFS_over_CP_OFDM}
S. K. Mohammed, S. Prakash, M. Ubadah, I. A. Khan, R. Hadani, S. Rakib, S. Kons, Y. Hebron, A. Chockalingam, and R. Calderbank, ``Zak-OTFS over CP-OFDM,'' \emph{arXiv preprint arXiv:2508.03906}, 2025.


\bibitem{Mattu_ZC_DD}
S. R. Mattu, I. A. Khan, V. Khammammetti, B. Dabak, S. K. Mohammed, K. Narayanan and R. Calderbank,``Multiple Preamble Detection with ZC Sequences in the Presence of Mobility and Delay Spread," \emph{2025 IEEE International Symposium on Information Theory (ISIT)}, Ann Arbor, MI, USA, 2025, pp. 1-6.

\bibitem{Zak67}
J. Zak, ``Finite translations in solid state physics,'' {\em Phy. Rev. Lett.}, 19, pp. 1385-1387, 1967.

\bibitem{Janssen88}
A. J. E. M. Janssen, ``The Zak transform: a signal transform for sampled time-continuous signals,'' {\em Philips J. Res.}, 43, pp. 23-69, 1988. 

\bibitem{Zak_OTFS_Identification}
D. Nisar, S. K. Mohammed, R. Hadani, A. Chockalingam, and R. Calderbank, ``Zak-OTFS for Identification of Linear Time-Varying Systems,'' \emph{IEEE Transactions on Signal Processing}, 2026.

\bibitem{Interleaved_Pilots}
J. Jayachandran, I. A. Khan, S. K. Mohammed, R. Hadani, A. Chockalingam, and R. Calderbank, ``Zak-OTFS With Interleaved Pilots to Extend the Region of Predictable Operation,'' \emph{IEEE Transactions on Vehicular Technology}, 2024.


\bibitem{Spreadpaper}
M. Ubadah, S. K. Mohammed, R. Hadani, S. Kons, A. Chockalingam and R. Calderbank, ``Zak-OTFS to Integrate Sensing the I/O Relation and Data Communication," \emph{arXiv preprint arXiv:2404.04182}, 2024.

\bibitem{3gpptr38901}
3GPP TR 38.901, ``Study on channel model for frequencies from 0.5 to 100 GHz," 3gpp Release 16, 2020.

\bibitem{3gppmcs1}
3GPP TS 38.211, ``NR; Physical Channels and Modulation," Release 15, 2018.

\bibitem{3gppmcs2}
3GPP TS 38.212, ``NR; Multiplexing and Channel Coding," Release 15, 2018.


\end{thebibliography}
\end{document}